\newcommand{\E}{\mathbb{E}}
\newcommand{\Prob}{\mathbb{P}}
\newcommand{\cov}{\mathrm{Cov}}
\newcommand{\diag}{\mathrm{diag}}
\newcommand{\LR}{\bold{R}^{\frac{1}{2}}}
\newcommand{\RT}{\bold{T}^{\frac{1}{2}}}
\newcommand{\RTh}{\bold{T}^{\frac{H}{2}}}
\newcommand{\FU}{{\bold{U}}}
\newcommand{\MV}{{\bold{V}^{\frac{1}{2}}}}
\newcommand{\FV}{{\bold{V}}}
\newcommand{\MS}{\bold{S}^{\frac{1}{2}}}
\newcommand{\MSl}{\bold{S}^{\frac{+}{2}}}
\newcommand{\MSr}{\bold{S}^{\frac{-}{2}}}
\newcommand{\FSl}{\bold{S}^{+}}
\newcommand{\FSr}{\bold{S}^{-}}
\newcommand{\FR}{\bold{R}}
\newcommand{\FT}{\bold{T}}
\newcommand{\LRk}{{\bold{R}^{\frac{1}{2}}_{k}}}
\newcommand{\RTk}{{\bold{T}^{\frac{1}{2}}_{k}}}
\newcommand{\LRl}{{\bold{R}^{\frac{1}{2}}_{l}}}
\newcommand{\RTl}{{\bold{T}^{\frac{1}{2}}_{l}}}
\newcommand{\FS}{{\bold{S}}}
\newcommand{\BQ}{{\bold{Q}}}
\newcommand{\BH}{{\bold{H}}}
\newcommand{\BI}{{\bold{I}}}
\newcommand{\BX}{{\bold{X}}}
\newcommand{\BZ}{{\bold{Z}}}
\newcommand{\BY}{{\bold{Y}}}
\newcommand{\BG}{{\bold{G}}}
\newcommand{\BA}{{\bold{A}}}
\newcommand{\BB}{{\bold{B}}}
\newcommand{\BC}{{\bold{C}}}
\newcommand{\BP}{{\bold{P}}}
\newcommand{\BW}{{\bold{W}}}
\newcommand{\BM}{{\bold{M}}}
\newcommand{\BF}{{\bold{F}}}
\newcommand{\BO}{{\mathcal{O}}}
\DeclareMathOperator{\Tr}{Tr}
\newcommand{\RNum}[1]{\uppercase\expandafter{\romannumeral #1\relax}}
\newtheorem{remark}{Remark}
\newtheorem{theorem}{Theorem}
\newtheorem{lemma}{Lemma}
\newtheorem{proposition}{Proposition}
\begin{document}
%
\title{Secrecy Analysis for IRS-aided Wiretap MIMO Communications: Fundamental Limits and System Design}
%
%
%

\author{Xin~Zhang,~{\textit{Graduate Student Member,~IEEE}} and 
        Shenghui~Song,~\IEEEmembership{Senior Member,~IEEE}
\thanks{The authors are with the Department of Electronic and Computer Engineering, The Hong Kong University of Science and Technology, Hong Kong
(e-mail: xzhangfe@connect.ust.hk; eeshsong@ust.hk).}}

%
%

\markboth{Journal of \LaTeX\ Class Files,~Vol.~14, No.~8, August~2015}%
{Shell \MakeLowercase{\textit{et al.}}: Bare Demo of IEEEtran.cls for IEEE Journals}
%



\maketitle

\begin{abstract}
In order to meet the demands of future innovative applications, many efforts have been made to exceed the limits predicted by Shannon's Theory. Besides the investigation of beyond-Shannon metrics such as security, latency, and semantics, another direction is to jointly design the transceiver and the environment by utilizing the intelligent reflecting surface (IRS). In this paper, we consider the analysis and design of IRS-aided multiple-input multiple-output (MIMO) secure communications, which has attracted much research attention but still in its infancy. For example, despite their importance, the fundamental limits of IRS-aided wiretap MIMO communications are not yet available in the literature. In this paper, we will investigate these fundamental limits by determining the ergodic secrecy rate (ESR) and secrecy outage probability (SOP). For that purpose, the central limit theorem (CLT) for the joint distributions of the mutual information (MI) statistics over the IRS-aided MIMO secure communication channel is derived by utilizing the random matrix theory (RMT). The derived CLT is then used to obtain the closed form expressions for the ESR and SOP, which are also extended to the scenario with multiple multi-antenna eavesdroppers. Based on the theoretical results, algorithms for maximizing the artificial noise (AN) aided ESR and minimizing the SOP are proposed. Numerical results validate the accuracy of the theoretical results and effectiveness of the proposed optimization algorithms. 
\end{abstract}

\begin{IEEEkeywords}
Intelligent reflecting surface (IRS), physical layer security (PLS), wiretap, multiple-input multiple-output (MIMO), random matrix theory (RMT).
\end{IEEEkeywords}

%
\IEEEpeerreviewmaketitle

%
%
%
%
\section{introduction}

Shannon's theory built up the foundation of information theory and has been driving the research and development of modern communications systems. However, the traditional Shannon metrics such as throughput and outage probability can no longer meet the requirements of many innovative applications,  e.g., augmented reality/virtual reality (AR/VR) and autonomous driving, which have stringent demands on other metrics including latency and security. Furthermore, semantic communication, which was ignored by Shannon's formulation, has attracted much attention due to the progress of machine learning. Besides the above novel communication metrics, another research direction that is beyond Shannon's theory is to design the channel.  In Shannon's formulation, the channel between the transceiver is determined by the environment and beyond our manipulation. However, the emergence of intelligent reflecting surface (IRS) has totally changed the story where a favorable channel can be constructed for communication, sensing, and energy transfer purposes~\cite{wang2022location,chu2021intelligent}. In this paper, we will investigate the fundamental limits of secure communications over IRS-aided multi-input multi-out (MIMO) channels.     

Wireless communication is vulnerable to eveasdropping due to its broadcast nature. Ensuring the security has been a pivotal issue all through the development of wireless networks. In recent years, besides the upper-layer cryptographic encryption methods, physical layer security (PLS) approaches including cooperative relaying, jamming and artificial noise (AN) have been proposed to guarantee information security~\cite{chen2016survey}. Although these techniques could enhance the secrecy performance, extra power consumption is required for jamming and relaying techniques. Meanwhile, the mere usage of jamming and AN techniques can not always guarantee the secrecy performance in poor propagation environments~\cite{yu2020robust,liu2021secrecy}.

Recently, IRSs have been proposed as a promising solution for future 6G wireless communications, due to its ability to manipulate the direction of the wave by adjusting the reflection phase shifts and offer additional degree of freedom to enhance the connectivity of the network in a low-cost manner. It has been shown in~\cite{guan2020intelligent} that the joint design of the transmit beamforming, the covariance matrix of the AN, and the phase shifts of the IRS outperforms traditional PLS approaches. Motivated by this result, many works have been devoted to the amalgamation of PLS and IRSs~\cite{chu2020secrecy, yu2020robust}. In~\cite{chu2020secrecy}, a joint design of the transmit precoding matrix, covariance matrix of the AN, and IRS phase shifts was proposed to maximize the achievable secrecy rate of the IRS-aided MIMO systems. In~\cite{yu2020robust}, a worst-case robust design of the AN-aided MIMO secure communications with multiple IRSs and multiple Eves was achieved by an alternating optimization (AO) algorithm.

Although many efforts have been dedicated to IRS-aided PLS, there are few works on the characterization of the fundamental performance limits of IRS-aided MIMO secure communications, e.g., the ergodic secrecy rate (ESR) and the secrecy outage probability (SOP), due to the complex structure of the cascaded fading channel, modeled by the product of two random matrices. For example, no result about the SOP analysis of general IRS-aided MIMO systems is available in the literature. In this paper, we will first determine the concerned limits, i.e., ESR and SOP, and then optimize the limits by joint transmitter and IRS design. In the following, we first review the state of arts for the performance characterization and system design of IRS-aided secure communications.  


\subsection{Secrecy Performance Characterization}
In~\cite{yang2020secrecy}, the authors gave an analytical approximation for the SOP of an IRS-aided single-input single-output (SISO) system with one single-antenna eavesdropper by central limit theory (CLT) and showed the positive effect of utilizing IRSs for enhancing the secrecy performance by numerical results. In~\cite{trigui2021secrecy}, the authors obtained the closed-form expressions for the ESR and SOP of IRS-aided SISO systems with discrete phase noise by the Mellin-Barnes integral. The SOP of the SISO system without direct link was investigated in~\cite{do2021secure}. Considering users' locations, the authors of~\cite{zhang2021physical} derived the ESR and SOP of IRS-aided MIMO systems by exploiting stochastic geometry and the  Mellin-Barnes integral, where the spatial correlation at the IRS was not considered. By assuming that the transmit antennas send the same symbol to the user, the authors of~\cite{liu2022minimization} used the Gamma distribution to fit the expressions for the SOP of the IRS-aided wiretap multiple-input multiple-output multiple-antenna-eavesdropper (MIMOME) system. However, the result is not applicable for MIMO systems with general precoding structures.

It is challenging to derive the closed-form performance limits for IRS-aided MIMO secure communications, due to the complex mathematical representation of the secrecy performance. To the best of the authors' knowledge, the ESR with AN and the SOP of general IRS-aided MIMO systems are not yet available in the literature. In this paper, we will give the analytical expressions for the ESR and SOP by large random matrix theory (RMT) for both wiretap and AN-aided MIMO systems by assuming that the number of antennas go to infinity with the same pace. In this asymptotic regime, large RMT has been shown to be powerful for the performance evaluation of MIMO systems~\cite{couillet2011deterministic,hoydis2013massive,hachem2008new,zhang2021bias} as the strikingly simple expression can be obtained and the evaluation is accurate even for small dimensions.

\subsection{System Design for Secure Communications}

Most existing works on secure communication design assumed perfect channel state information (CSI) at the transmitter, However, perfect CSI is extremely difficult to obtain for IRS-aided systems~\cite{kammoun2020asymptotic,zhi2022power}. Meanwhile, the acquisition of CSI for the eavesdroppers at the base station (BS) is also difficult as the eavesdroppers do not interact with the BS frequently~\cite{yu2020robust}. Furthermore, in the scenarios with high mobility users, it is impractical to tune the phase shifts of the IRS to catch up with the changing channels. On the contrary, the statistical CSI varies slowly and can be estimated easily. Furthermore, the design based on statistical CSI does not need to be updated frequently, which not only reduces  the computation workload of the transmitter~\cite{zhao2020intelligent} and the power consumption of the IRS controller, but also releases the overhead of the IRS's control link~\cite{zhi2022power}. In fact, statistical CSI has been widely used in the design of the IRS-aided systems~\cite{zhang2021large,zhang2022outage}. In~\cite{liu2021secrecy}, an AO based algorithm was proposed to maximize the ESR based on statistical CSI.

In this paper, we will first characterize the PLS performance of IRS-aided systems including both wiretap and AN-aided MIMOME systems, based on only statistical CSI. An AO algorithm is then proposed to maximize the AN-aided ESR by jointly optimizing the phase shifts and the transmit covariance matrices of the signal and the AN. To minimize the SOP, we propose a gradient descent algorithm. Finally, numerical results are presented to validate the accuracy of the performance characterization and the effectiveness of the proposed algorithms. The contributions of this paper are summarized as follows.

1) By RMT, we set up the CLTs for the joint distribution of the mutual information (MI) statistics necessary for characterizing the performance of IRS-aided MIMO secure communications. The explicit expression of arbitrary covariances between any MIs are given. The result is more general than the single-variate version given in~\cite{hachem2008new} and~\cite{zhang2022outage} and provides an analytical approach to investigate the joint distribution of MIs with correlated channels.

2) Based on the CLTs, we give closed form expressions for the ESR and SOP of IRS-aided MIMO systems with and without AN, which serve as the fundamental limits of the concerned system. The results are also generalized to the scenario with multiple eavesdroppers. 

3) With only statistical CSI, we propose an algorithm to jointly optimize the transmit covariance matrices (signal and AN) and the phase shifts of the IRS to maximize the ESR. The non-convex problem is handled by successive convex approximation (SCA). Furthermore, we propose an algorithm to minimize the SOP of the IRS-aided system. Simulation results show that the AN-aided approach performs better than that without AN and the proposed algorithms improve the system performance significantly.

\textit{Paper Organizations:} We organize the rest of the paper as follows. Section~\ref{sys_mol} introduces the system model and formulates the problem. Section~\ref{sec_clts} presents the theoretical results----CLT for the joint distribution of the MIs. Section~\ref{sec_esr} and Section~\ref{sec_sop} give the analytical expressions of the ESR and SOP for the wiretap and AN-aided systems, respectively. Section~\ref{sec_alg} provides the AO algorithm to maximize the ESR by jointly designing the phase shifts and the covariance matrices of the signal and AN. A gradient based algorithm is also proposed to minimize the SOP. Section~\ref{sec_simu} presents numerical results to illustrate the accuracy of the analytical results and the performance of the proposed algorithms. Section~\ref{sec_con} concludes this paper.

\textit{Notations:} We use the bold, upper case letters and bold, lower case letters to denote matrices and vectors, respectively. $\mathbb{P}(\cdot)$ represents the probability operator and $\E x$ denotes the expectation of $x$. $\mathbb{C}^{N}$ and $\mathbb{C}^{M\times N}$ represent the $N$-dimensional vector space and the $M$-by-$N$ matrix space, respectively. $(\cdot)^{*}$ represents the conjugate of a complex number. $\bold{A}^{H}$ represents the conjugate transpose of $\bold{A}$, and the $(i,j)$-th entry of $\bold{A}$ is denoted by or $[\BA]_{i,j}$ or $A_{ij}$. $\|\BA \|$ represents the spectral norm of $\bold{A}$. $\Tr\BA$ refers to the trace of $\BA$ if it is square and $\otimes$ denotes the element-wise product of matrices. $\bold{I}_{N}$ denotes the $N$-dimension identity matrix. The cumulative distribution function (CDF) of the standard normal distribution is denoted by $\Phi(x)$. $\mathbbm{1}_{e}$ denotes the indicator function, i.e., $\lceil \cdot \rceil^{+}$ represents the ceiling function. $\mathbbm{1}_{e}=1$ if $e$ holds true and else $\mathbbm{1}_{e}=0$. $\underline{x}=x-\E x$ represents the centered random variable and $\cov(x,y)=\E\underline{x}\underline{y} $ denotes the covariance of random variables $x$ between $y$. $\xrightarrow[N \rightarrow \infty]{\mathcal{D}}$ denotes the convergence in distribution.  


\section{System Model}
\label{sys_mol}
Consider an IRS-aided downlink MIMO system, consisting of a base station (Alice or BS), a user (Bob), and an eavesdropper (Eve), which are equipped with $M$, $N_{B}$, and $N_{E}$ antennas, respectively. The IRS has $L$ elements. Due the to blockage, there is no direct link between the BS and Bob, and the same happens to Eve. The BS-IRS channel is denoted by $\BH_{T}\in \mathbb{C}^{L \times M}$. The IRS-Bob and IRS-Eve channels are represented by $\BH_{B,I}\in  \mathbb{C}^{N_B \times L}$ and $\BH_{E,I}\in  \mathbb{C}^{N_E \times L}$, respectively. In the following, we will introduce two secure communication schemes, i.e., wiretap systems and AN-aided systems. 

\subsection{IRS-aided Wiretap Systems}
In the wiretap system (AN is not considered here), the received signal $\bold{y}_{B}\in \mathbb{C}^{N_B}$ of Bob is given by 
\begin{equation}
\bold{y}_{B}=\BH_{B,I}\bold{\Theta}\BH_{T}\BW\bold{x}+\bold{n}_{B},
\end{equation}
where $\bold{x}\in \mathbb{C}^{M} $, following $\mathcal{CN}(\bold{0} ,\bold{I}_{M})$, represents the transmitted signal. $\BW$ represents the precoding matrix at the BS. $\bold{n}_{B}\in \mathbb{C}^{N_{B}}$ is the additive white Gaussian noise (AWGN), with variance $\sigma^2_{B}$ and $\bold{\Theta}=\diag(e^{\jmath \theta_{1}},e^{\jmath \theta_{1}},...,e^{\jmath \theta_{L}})$ denotes the phase shifts introduced by the IRS. Similarly, the received signal of Eve $\bold{y}_{E}\in \mathbb{C}^{N_{E}}$ is represented by
\begin{equation}
\bold{y}_{E}=\BH_{E,I}\bold{\Theta}\BH_{T}\BW\bold{x}+\bold{n}_{E}.
\end{equation}
The cascaded channel of Bob and Eve are denoted by $\BH_{B}=\BH_{B,I}\bold{\Theta}\BH_{T}$ and $\BH_{E}=\BH_{E,I}\bold{\Theta}\BH_{T}$, respectively.

For ease of illustration, we introduce the following MI related notation
\begin{equation}
\begin{aligned}
M(z_k,\BH_k,\BP_k)=\log\det(z_k\bold{I}_{N_{k}}+\BH_{k}\BP_k\BH^{H}_{k}).
\end{aligned}
\end{equation}
Assuming Wyner code is utilized for encoding the confidential message over an ergodic fading channel, the achievable ESR (bit/s/Hz) of the wiretap system is given by~\cite{gopala2008secrecy}
\begin{equation}
\label{esr_def}
\begin{aligned}
&M_{S}(\BP_{W})
=\lceil\E F_{B,E}(\BP_W)
-N_B\log(\sigma^2_B)
\\
&
+N_E\log(\sigma^2_E) \rceil^{+}
,
\end{aligned}
\end{equation}
where $\BP_W=\BW\BW^{H}$ and $F_{B,E}(\BP_W)=M(\sigma^2_B,\BH_B,\BP_W)-M(\sigma^2_E,\BH_E,\BP_W).
$
Given a rate threshold $R$, the SOP of the wiretap system is given by
\begin{equation}
\begin{aligned}
&P_{S,out}(R)=\Prob(F_{B,E}(\BP_W)
\\
&
-N_B\log(\sigma^2_B)+N_E\log(\sigma^2_E) <R).
\end{aligned}
\end{equation} 

\subsection{Artificial Noise-aided Systems}
With AN, the transmitted
signal can be modeled as
\begin{equation}
\bold{s}=\BW\bold{x}+\bold{n}_{A},
\end{equation}
where $\BW\in \mathbb{C}^{M\times M}$ represents the precoding matrix at the BS and $\bold{n}_{A}\in \mathbb{C}^{M}$ denotes the AN, which follows the Gaussian distribution $\mathcal{N}(\bold{0}_{M},\BP_V)$. In this case, the received signals of Bob and Eve are given by
\begin{equation}
\begin{aligned}
\bold{y}_{B}=\BH_{B}(\BW\bold{x}+\bold{n}_{A})+\bold{n}_{B},
\\
\bold{y}_{E}=\BH_{E}(\BW\bold{x}+\bold{n}_{A})+\bold{n}_{E}.
\end{aligned}
\end{equation}
Define
\begin{equation}
F_{I,B,E}(\BP_W,\BP_V)=M_{I,B}(\BP_W,\BP_V)-M_{I,E}(\BP_W,\BP_V),
\end{equation} 
where
\begin{equation}
\label{C_inf}
M_{I,k}(\BP_1,\BP_2)=\log\det(\bold{I}_{N_{k}}+\BH_{k}\BP_1 \BH^{H}_{k}\FU_{k}(\BP_2)),
\end{equation} 
and $\FU_{k}(\BP_2)=\left(\sigma_{k}^{2}\bold{I}_{N_{k}}+\BH_{k}\BP_2 \BH_{k} \right)^{-1}$ is the information-plus-noise covariance matrix. The ESR of the AN-aided system can be expressed as~\cite{gopala2008secrecy,wang2016artificial}
\begin{equation}
\label{ESR_AN}
\begin{aligned}
 M_{AN}(\BP_W,\BP_V)
=
\lceil \E [F_{I,B,E}(\BP_W,\BP_V)] \rceil^{+}.
\end{aligned}
\end{equation}
Given a rate threshold $R$, the SOP of the AN-aided system is given by
\begin{equation}
P_{AN,out}(R)=\Prob({F}_{I,B,E}(\BP_{W},\BP_V) <R).
\end{equation} 

\subsection{Channel Model}
In this paper, we consider two types of channel and use the Kronecker model to characterize the spatial correlation for each MIMO link.
\subsubsection{LoS BS-IRS (LBI) case}
In this case, the BS-IRS link is dominated by the line-of-sight (LoS) component and the IRS-Bob and IRS-Eve links are modeled as a correlated Rayleigh channel. Under such circumstance, the channel matrix can be given by
\begin{equation}
\label{singl_ch}
\BH_{k}=\LR_{k}\BX_{k}\RT_{S,k}\bold{\Theta}\BH_{T,0},~~k=B,E.
\end{equation} 
where $\BH_{T,0}\in\mathbb{C}^{M \times L}$ represents the LoS channel from the BS to the IRS. $\FT_{S,k}$ and $\FR_{k}$ denote the spatial correlation matrices at the IRS and the receiver of user $k$. $\bold{X}_{k}\in\mathbb{C}^{N _k\times L}$ denotes an independent and identically distributed (i.i.d.) Gaussian random matrix, whose entries follow $\mathcal{CN}(0, \frac{1}{L})$. The BS-IRS channel is regarded as a deterministic channel such that $\BH_{k}$ can be treated as a single Rayleigh channel with a special transmit correlation matrix, i.e., $\BH_k=\LR_k\BX_k\FT^{\frac{+}{2}}_{k}$, where $\FT^{\frac{+}{2}}_{k}=\RT_{S,k}\bold{\Theta}\BH_{T,0}$ and $\FT^{\frac{-}{2}}_{k}=(\FT^{\frac{+}{2}}_{k})^{H}$ for $k=B,E$. As a result, we will also refer to this case as the single hop case~\cite{kammoun2020asymptotic}.
\subsubsection{Double-scattering case}
For the double scattering case, the equivalent channel between the BS and user $k$ can be given by 
\begin{equation}
\BH_{k}=\BH_{k,1}\bold{\Theta}\BH_{k,2},
\end{equation}
where
\begin{equation}
\label{cha_mod}
\bold{H}_{k,1}=\bold{R}_{k}^{\frac{1}{2}}\bold{X}_{k}\bold{T}^{\frac{1}{2}}_{S,k},~\bold{H}_{k,2}=\bold{R}_{S}^{\frac{1}{2}}\bold{Y}\bold{T}^{\frac{1}{2}}
\end{equation}
represent the channel from the IRS to user $k$ and the channel from the BS to the IRS, respectively. $\bold{R}_{k}\in \mathbb{C}^{N_{k}\times N_k}$, $\bold{T}_{k,S}\in \mathbb{C}^{L\times L}$, $\FR_{S}\in  \mathbb{C}^{L \times L}$ and $\FT\in \mathbb{C}^{M \times M}$ are positive semi-definite matrices. In particular, $\bold{T}_{S,k}$ and $\FR_{S}$ represent the transmit and receive correlation matrix of the IRS. $\bold{R}_{k}$ denotes the spatial correlation matrix at user $k$ ($k=B,E$ for Bob or Eve) and $\FT$ denotes the correlation at the BS. $\bold{X}_k \in\mathbb{C}^{N_k \times L}$ and $\bold{Y}\in\mathbb{C}^{L \times M}$ are independent and identically distributed (i.i.d.) Gaussian random matrices, whose entries follow $\mathcal{CN}(0, \frac{1}{L})$ and $\mathcal{CN}(0, \frac{1}{M})$, respectively. We assume that statistical CSI, i.e., correlation matrices of the channel, is available. 
Denote $\FS^{\frac{+}{2}}_k=\bold{T}^{\frac{1}{2}}_{S,k}\bold{\Theta}\LR_{S}$, $\FS^{\frac{-}{2}}_k=(\FS^{\frac{+}{2}})^{H}$ and $\FS_k=\FS_k^{\frac{-}{2}}\FS^{\frac{+}{2}}$. The equivalent channel can also be represented by
\begin{equation}
\label{doub_ch}
\BH_{k}=\LR_k \BX_k \FS^{\frac{+}{2}}_k\BY\RT.
\end{equation}

\subsection{Problem Formulation}
In this paper, we will investigate the asymptotic characterization of the ESR and SOP of IRS-aided MIMOME systems. The secrecy rate of the wiretap system and AN-aided system in~(\ref{esr_def}) and (\ref{ESR_AN}) can be rewritten as
\begin{subequations}
\label{c_inf_ab}
\begin{align}
\label{C_inf1}
&F_{B,E}(\BP_W)=M(\sigma^2_B,\BH_B,\BP_W)-M(\sigma^2_E,\BH_E,\BP_W),
\\
\begin{split}
\label{AN_MIS}
&F_{I,B,E}(\BP_W,\BP_V)=M(\sigma^{2}_B,\BH_B,\BP_{U})
\\
&
-M(\sigma^{2}_B,\BH_B,\BP_{V})
-M(\sigma^{2}_E,\BH_E,\BP_{U})
\\
&+M(\sigma^{2}_E,\BH_E,\BP_{V}),
\end{split}
\end{align}
\end{subequations}
respectively, where $\BP_{U}=\BP_{W}+\BP_V$. It can be observed that the secrecy rates of both wiretap and AN-aided systems can be represented by linear combinations of the MI statistics $M(z_k,\BH_k,\BP)$. Therefore, the characterization of the ESR and SOP can be resolved if we can obtain the joint distribution of MIs in~(\ref{C_inf1}) and~(\ref{AN_MIS}). 

The two MIs in~(\ref{C_inf1}) can be used to quantify the information received by Bob and Eve with the same transmit covariance $\BP_W$. The first MI in~(\ref{AN_MIS}) can be regarded as the MI between the BS and Bob with the transmit matrix $\BP_U$ and the second MI represents the loss induced by the interference of the AN with covariance matrix $\BP_V$. The last two MIs in~(\ref{AN_MIS}) can be understood similarly. The MIs involved in~(\ref{AN_MIS}) have two types of relations. On one hand, there are MIs that have independent $\BX$, e.g., $M(\sigma^{2}_B,\BH_B,\BP_{U})$ and $M(\sigma^{2}_E,\BH_E,\BP_{U})$. On the other hand, there are MIs that share the same $\BX$. For example, $M(\sigma^{2}_B,\BH_B,\BP_{U})$ and $M(\sigma^{2}_B,\BH_B,\BP_{V})$ share the same $\BX_{B}$ and the only difference comes from the transmit covariance matrix. As a result, to investigate the joint distribution of the MIs, we need to consider the case when two MIs share the same $\BX$. 

For the double-scattering case, we will resort to investigate the joint distribution of $M(z_k,\BH_k,\BP_k)$ with $\BH_{k}=\LR_k \BX_k \FS^{\frac{+}{2}}_k\BY\RT_k$, $k=1,2,...,K$, which share the same $\BX$, i.e., $\BX_{i}=\BX_{j},~i\neq j$. Note that $\BY$ is shared by all $\BH_k$. For the LBI case, we consider the channel $\BH_k=\LR_k\BX_k\RT_k$. The challenge in determining the joint distribution lies in the fact that the MIs are not independent. To the best of the authors' knowledge, there is no result regarding the joint distribution of the MIs, which will be the theoretical contribution of this paper. 

According to~(\ref{c_inf_ab}), we will derive the joint distribution of 
\begin{equation}
\label{MIsk}
(M(z_1,\BH_1,\BP_{1}),M(z_2,\BH_2,\BP_{2}),...,M(z_k,\BH_k,\BP_{K})).
\end{equation}
As there are many MIs involved, we introduce the following notation. In particular, we use the subscript to differentiate the MIs
\begin{equation}
\label{subscri}
M_{XY}=M(z_X,\BH_X,\BP_Y),
\end{equation}
where $X=\{B,E\}$ denotes different users. The index $X$ will determine the channel matrix and the noise power $\sigma^2$. $Y=\{U, W,V \}$ represents different transmit covariance matrices ($\BP_{U},\BP_W,\BP_V$). Thus, the SOP with AN can be obtained if we can determine the joint distribution of 
\begin{equation}
\begin{aligned}
(M_{BU},M_{BV},M_{EU},M_{EV}),
\end{aligned}
\end{equation}
which will be given in Section~\ref{sec_clts}. To avoid ambiguity, we will use $C$ and $D$ to denote the secrecy rate of the double-scattering and LBI channel, respectively. In the following, we will first introduce some preliminary results.

\subsection{Assumptions and Preliminary Results}
The results of this paper are developed based on the following assumptions.

\textbf{Assumption 1.} (Asymptotic Regime) $0<\lim\inf\limits_{M \ge 1}  \frac{M}{L} \le \frac{M}{L}  \le \lim \sup\limits_{M \ge 1} \frac{M}{L} <\infty$, $0<\lim \inf\limits_{M \ge 1}  \frac{M}{N_{k}} \le \frac{M}{N_{k}}  \le \lim \sup\limits_{M \ge 1}  \frac{M}{N_{k}} <\infty$.

\textbf{Assumption 2.} $\lim \sup\limits_{M\ge 1} \| \bold{R}_k\| <\infty$, $\lim \sup\limits_{M\ge 1} \| \bold{S}_{k}\| <\infty$, $\lim \sup\limits_{M\ge 1} \| \bold{T}\| <\infty$, $k=B,E$, $\lim \sup\limits_{M\ge 1} \| \BH_{T,0} \| <\infty$.

\textbf{Assumption 3.} $\inf\limits_{M\ge 1} \frac{1}{M}\Tr\bold{R}_k>0  $, $\inf\limits_{M\ge 1} \frac{1}{M}\Tr\bold{T}_k>0 $, $\inf\limits_{M\ge 1} \frac{1}{M}\Tr\bold{S}_{k}>0$.

 \textbf{A.1} is the asymptotic regime considered for the large-scale system, where the numbers of the antennas ($N_k$, $L$, and $M$) grow to infinity with  same pace. \textbf{A.2} and~\textbf{A.3} are given to exclude the extremely low-rank correlation matrices, where the rank of the correlation matrix does not increase with the number of antennas~\cite{zhang2022outage}.

We introduce the following results on the ergodic rate, which will be used to characterize the asymptotic joint distribution of the MIs.

\subsubsection{LBI case} Let $(\alpha_{k},\overline{\alpha}_{k})$ be the positive solution of the following system of equations,
 \begin{equation}
 \label{sing_eq1}
 \begin{aligned}
 \begin{cases}
  &\alpha_{k}=\frac{1}{M}\Tr\bold{R}_{k}\left(z_{k} \bold{I}_{N_{k}}+\overline{\alpha}_{k}\bold{R}_{k}   \right)^{-1},
 \\
 &\overline{\alpha}_{k}=\frac{1}{M}\Tr \bold{T}_{k}\left(\bold{I}_{L}+{\alpha}_{k}\bold{T}_{k}\right)^{-1}.
\end{cases}
  \end{aligned}
 \end{equation}
Define matrices $\bold{L}_{R,k}=\left(z_{k} \bold{I}_{N_{k}}+\overline{\alpha}_{k}\bold{R}_{k}   \right)^{-1}$ and $\bold{L}_{T,k}=\left(\bold{I}_{L}+{\alpha}_{k}\bold{T}_{k}\right)^{-1}$.
 \begin{lemma} (\cite[Theorem 1]{hachem2008new})
\label{sing_mean}
Given that assumptions~\textbf{A.1}-\textbf{A.3} hold true and $(\alpha_k,\overline{\alpha}_k)$ is the positive solution of~(\ref{sing_eq1}), the expectation of the MI $D_k=\log\det(z_k\bold{I}_{N_k}+\BH_{k}\BH^{H}_k)$ with $\BH_{k}=\FR_{k}^{\frac{1}{2}}\BX_{k}\FT_{k}^{\frac{1}{2}}$, can be approximated by
\begin{equation}
\E D_k = \overline{D}(z_k,\FR_k,\FT_k)+\mathcal{O}(\frac{1}{M}),
\end{equation}
where
\begin{equation}
\begin{aligned}
\label{dz_mean}
&\overline{D}(z_k,\FR_k,\FT_k)=\log\det(z_k\bold{I}_{N_k}+\overline{\alpha}_{k} \bold{R}_k )
\\
&
+\log\det(\bold{I}_{M}+ \alpha_k \bold{T}_k)
-M\alpha_{k}\overline{\alpha}_k.
\end{aligned}
\end{equation}
\end{lemma}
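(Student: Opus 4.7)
The plan is to apply the resolvent method from large random matrix theory. Introduce the resolvent $\BQ_k(z) = (z\bold{I}_{N_k} + \BH_k\BH_k^H)^{-1}$ and use the identity
$$\frac{d}{dz}\log\det(z\bold{I}_{N_k} + \BH_k\BH_k^H) = \Tr\BQ_k(z),$$
which reduces the problem to (i) producing a deterministic equivalent for $\E\Tr\BQ_k(z)$ and (ii) recovering $\E D_k$ by integration in $z$ with the correct reference asymptotics at infinity.

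First I would derive the fixed-point equations via Gaussian integration by parts on the i.i.d.\ entries of $\BX_k$. For every $[\BX_k]_{ij}$,
$$\E\bigl[[\BX_k]_{ij}\,f(\BX_k)\bigr] = \frac{1}{L}\,\E\!\left[\frac{\partial f}{\partial [\BX_k^*]_{ij}}\right].$$
Applying this formula to each entry of $\E\BH_k^H\BQ_k$ together with the resolvent identity $z\BQ_k = \bold{I}_{N_k} - \BH_k\BH_k^H\BQ_k$ and exploiting the Kronecker structure $\BH_k = \FR_k^{1/2}\BX_k\FT_k^{1/2}$ produces two coupled relations that, up to $\mathcal{O}(M^{-1})$ fluctuations, are satisfied by $\widetilde\alpha_k = \frac{1}{M}\E\Tr\FR_k\BQ_k(z)$ and $\widetilde{\overline\alpha}_k = \frac{1}{M}\E\Tr\FT_k(\bold{I}_L + \widetilde\alpha_k\FT_k)^{-1}$. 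The Nash--Poincaré variance inequality, together with \textbf{A.1}--\textbf{A.3}, yields $\Var[\frac{1}{M}\Tr\BA\BQ_k] = \mathcal{O}(M^{-2})$ for any deterministic $\BA$ of bounded spectral norm, which legitimizes replacing expectations of products by products of expectations. Existence and uniqueness of the positive solution $(\alpha_k,\overline\alpha_k)$ to~(\ref{sing_eq1}) follow from a standard monotonicity/contraction argument, yielding
$$\E\Tr\BQ_k(z) = \Tr(z\bold{I}_{N_k} + \overline\alpha_k\FR_k)^{-1} + \mathcal{O}(M^{-1}).$$

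Next I would verify that $\overline{D}(z,\FR_k,\FT_k)$ is an antiderivative of this deterministic equivalent. Differentiating~(\ref{dz_mean}) in $z$ and using the implicit derivatives of $\alpha_k,\overline\alpha_k$ from~(\ref{sing_eq1}), the four terms involving $\partial_z\alpha_k$ and $\partial_z\overline\alpha_k$ cancel pairwise because $\Tr\FR_k(z\bold{I}_{N_k}+\overline\alpha_k\FR_k)^{-1} = M\alpha_k$ and $\Tr\FT_k(\bold{I}_L+\alpha_k\FT_k)^{-1} = M\overline\alpha_k$, leaving
$$\frac{d}{dz}\overline{D}(z,\FR_k,\FT_k) = \Tr(z\bold{I}_{N_k} + \overline\alpha_k\FR_k)^{-1}.$$
Integrating in $z$ from $z_k$ to $+\infty$ and matching the common large-$z$ asymptotics $\E D_k \sim N_k\log z$ and $\overline{D}(z,\FR_k,\FT_k) \sim N_k\log z$ yields $\E D_k = \overline{D}(z_k,\FR_k,\FT_k) + \mathcal{O}(M^{-1})$.

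The main obstacle is obtaining the $\mathcal{O}(M^{-1})$ rate rather than merely $o(1)$. This requires variance estimates that are uniform over the integration tail $[z_k,\infty)$ and careful bookkeeping of the cubic resolvent remainders (e.g.\ terms of the form $\Tr\BA\BQ_k\BB\BQ_k\BC\BQ_k$) produced by successive integration by parts; these must be controlled by combining $\|\BQ_k(z)\| \le z^{-1}$ with the spectral-norm bounds in \textbf{A.2} so that the error is stable as $M\to\infty$. Once uniformity in $z$ is secured, integration against $dz$ preserves the rate and the claimed approximation follows.
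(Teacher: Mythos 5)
Your proposal is correct and follows essentially the same route as the source the paper cites for this lemma (\cite[Theorem 1]{hachem2008new}): Gaussian integration by parts plus the Nash--Poincar\'e variance control to get the deterministic equivalent of $\E\Tr\BQ_k(z)$, verification that $\overline{D}$ is its antiderivative via the fixed-point equations~(\ref{sing_eq1}), and integration in $z$ with uniform error control to recover the $\mathcal{O}(1/M)$ rate. The paper itself does not reprove the lemma but simply invokes that reference, so no further comparison is needed.
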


\subsubsection{Double-scattering case}
Let $(\delta_{k},\omega_{k},\overline{\omega}_{k})$ be the solution of the system of equations,
 \begin{equation}
 \label{basic_eq1}
 \begin{aligned}
 \begin{cases}
  &\delta_{k}=\frac{1}{L}\Tr\bold{R}_{k}\left(z_{k} \bold{I}_{N_{k}}+\frac{M \omega_{k}\overline{\omega}_{k}}{L\delta_{k}}\bold{R}_{k}   \right)^{-1},
 \\
 &\omega_{k}=\frac{1}{M}\Tr \bold{S}_{k}\left(\frac{1}{\delta_{k}}\bold{I}_{L}+\overline{\omega}_{k}\bold{S}_{k}\right)^{-1}, 
 \\
 &\overline{\omega}_{k}=\frac{1}{M}\Tr\bold{T}_{k}\left(\bold{I}_{M}+\omega_{k}\bold{T}_{k}\right)^{-1}.
\end{cases}
  \end{aligned}
 \end{equation}
Define $\BG_{R,k}=\left(z_{k} \bold{I}_{N_{k}}+\frac{M \omega_{k}\overline{\omega}_{k}}{L\delta_{k}}\bold{R}_{k}   \right)^{-1}$, $\BG_{S,k}=\left(\frac{1}{\delta_{k}}\bold{I}_{L}+\overline{\omega}_{k}\bold{S}_{k}\right)^{-1}$, $\BF_{S,k}=\left(\bold{I}_{L}+\delta_k \overline{\omega}_{k}\bold{S}_{k}\right)^{-1}$, and $\BG_{T,k}=\left(\bold{I}_{M}+\omega_{k}\bold{T}_{k}\right)^{-1}$.
  
\begin{lemma} (\cite[Theorem 1]{zhang2022asymptotic})
\label{dua_mean}
Given that assumptions~\textbf{A.1}-\textbf{A.3} hold true and $(\delta_k,\omega_k,\overline{\omega}_k)$ is the positive solution of~(\ref{basic_eq1}), the expectation of the MI $ C_k=\log\det(z_k\bold{I}_{N_k}+\BH_{k}\BH^{H}_k)$ with $\BH_{k}=\FR_{k}^{\frac{1}{2}}\BX_{k}\bold{S}_{k}^{\frac{+}{2}}\BY\FT_{k}^{\frac{1}{2}}$, can be approximated by
\begin{equation}
\E C_k= \overline{C}(z_k,\FR_{k},\FS_{k},\FT_{k})+\mathcal{O}(\frac{1}{M}),
\end{equation}
where
\begin{equation}
\begin{aligned}
\label{the_mean}
&
\overline{C}(z_k,\FR_{k},\FS_{k},\FT_{k})=\log\det(z_k\bold{I}_{N_k}+\frac{ M \omega_k\overline{\omega}_k}{ L\delta_k}\bold{R}_k )
\\
&
+\log\det(\bold{I}_{L}+\delta_k \overline{\omega}_k\bold{S}_k)
+\log\det(\bold{I}_{M}+\omega_k \bold{T}_k)
\\
&
-2M\omega_{k}\overline{\omega}_k.
\end{aligned}
\end{equation}
\end{lemma}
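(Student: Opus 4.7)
The plan is to pass from the log-determinant to the Stieltjes transform via
\begin{equation*}
\E C_k=N_k\log z_k+\int_{z_k}^{\infty}\Bigl(\tfrac{N_k}{w}-\E\Tr(w\bold{I}_{N_k}+\BH_k\BH_k^H)^{-1}\Bigr)dw,
\end{equation*}
derive a deterministic equivalent for the resolvent trace at each $w\ge z_k$, and integrate the result back in $w$. To match this primitive with the closed form~(\ref{the_mean}), I would differentiate $\overline{C}(w,\FR_k,\FS_k,\FT_k)$ in $w$ and use the three defining relations in~(\ref{basic_eq1}) to show that all terms arising from implicit derivatives of $(\delta_k,\omega_k,\overline{\omega}_k)$ cancel, leaving only the resolvent-trace contribution encoded by $\BG_{R,k}$.

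\textbf{Key steps.} Condition first on $\BY$. Writing $\BH_k=\LR_k\BX_k\BA_k$ with $\BA_k=\FS_k^{\frac{+}{2}}\BY\RT_k$, the conditional channel is Kronecker-correlated Gaussian with receive correlation $\FR_k$ and \emph{random} effective transmit correlation $\widetilde{\FT}_k(\BY)=\BA_k\BA_k^H=\FS_k^{\frac{+}{2}}\BY\FT_k\BY^H\FS_k^{\frac{-}{2}}$. Applying Lemma~\ref{sing_mean} conditionally on $\BY$ yields $\BY$-measurable scalars $(\widetilde{\alpha}_k(\BY),\widetilde{\overline{\alpha}}_k(\BY))$ that solve the single-layer system with $\widetilde{\FT}_k(\BY)$ in place of $\FT_k$. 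The remaining randomness sits in $\widetilde{\overline{\alpha}}_k(\BY)=\tfrac{1}{M}\Tr\widetilde{\FT}_k(\BY)(\bold{I}_L+\widetilde{\alpha}_k\widetilde{\FT}_k(\BY))^{-1}$, a bilinear-in-$\BY$ functional sandwiched with a resolvent of $\BY$; a second application of the deterministic-equivalent machinery at the $\BY$-layer produces the coupled three-variable fixed point~(\ref{basic_eq1}), with identifications $\delta_k\leftrightarrow\widetilde{\alpha}_k$ and $(\omega_k,\overline{\omega}_k)$ capturing the averaging over $\BY$.

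\textbf{Tools and main obstacle.} The recurring ingredients are Gaussian integration by parts (Stein's identity) for the entries of $\BX_k$ and $\BY$, the Poincar\'e--Nash inequality to bound variances of smooth linear statistics at order $\mathcal{O}(M^{-2})$, and the resolvent identity $\BA^{-1}-\BB^{-1}=\BA^{-1}(\BB-\BA)\BB^{-1}$ for propagating perturbations between actual and deterministic-equivalent resolvents; existence and uniqueness of the positive solution of~(\ref{basic_eq1}) then follow from a standard monotone-contraction argument on the positive cone. The principal difficulty is the coupling between the two Gaussian layers: once one conditions on $\BY$, the auxiliary $\widetilde{\alpha}_k(\BY)$ is itself random, so I must establish that it concentrates at rate $\mathcal{O}(1/M)$ and that the fixed-point map admits Lipschitz-stable solutions in its data before unconditioning. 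Upgrading the resulting $\mathcal{O}(1/M)$ Stieltjes-transform control into an $\mathcal{O}(1/M)$ bound on the log-determinant, uniformly in $w\in[z_k,\infty)$ with integrable tail behaviour as $w\to\infty$, is the technical heart of the argument.
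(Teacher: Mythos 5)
The paper does not actually prove Lemma~\ref{dua_mean}: it is quoted verbatim from \cite[Theorem 1]{zhang2022asymptotic}, and the derivation there (echoed in this paper's Appendix~\ref{dua_clt_proof} for the CLT) works \emph{directly} on the double-scattering structure, applying the Gaussian integration-by-parts formula and the Nash--Poincar\'e inequality jointly to the entries of $\BX_k$ and $\BY$ to produce the three coupled equations~(\ref{basic_eq1}) and the quantity $\overline{C}$ in one pass, with variance control giving the $\mathcal{O}(1/M)$ rate. Your route is genuinely different: you condition on $\BY$, invoke the single-hop result (Lemma~\ref{sing_mean}) with the random effective transmit correlation $\widetilde{\FT}_k(\BY)=\FS_k^{\frac{+}{2}}\BY\FT_k\BY^{H}\FS_k^{\frac{-}{2}}$, and then run a second deterministic-equivalent layer over $\BY$, composing the two two-variable fixed points into the three-variable system (up to the $M/L$ normalization mismatch between $\alpha_k$ and $\delta_k$, which you should track explicitly). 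This nested approach buys reuse of the known one-hop formula and makes transparent how the $-2M\omega_k\overline{\omega}_k$ term arises as the sum of the two layers' $-M\alpha\overline{\alpha}$-type corrections; its price is exactly the issue you flag, namely that the conditional fixed point $\widetilde{\alpha}_k(\BY)$ is random, so you must show it concentrates at rate $\mathcal{O}(1/M)$, that Lemma~\ref{sing_mean}'s assumptions (spectral-norm and trace bounds on $\widetilde{\FT}_k(\BY)$) hold uniformly on a high-probability event with controlled constants, and that the fixed-point map is Lipschitz-stable so the errors propagate; the direct Gaussian-tools derivation of the cited reference avoids these random-fixed-point stability arguments, which is presumably why it is the route taken there. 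Both approaches are sound; your Stieltjes-transform integration step and the verification that $\partial\overline{C}/\partial w$ reduces to $\Tr\BG_{R,k}$ via the implicit-derivative cancellations from~(\ref{basic_eq1}) are standard and consistent with the structure of~(\ref{the_mean}).
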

\begin{remark}
(\ref{dz_mean}) and (\ref{the_mean}) are referred to as the deterministic approximation of $D_k$ and $C_k$, respectively. We can apply the substitution $\FT'_k=\RT_k \BP_k\RT_k$ to obtain the deterministic approximation for $M(z_k,\BH_k,\BP_k)$. By Lemmas~\ref{sing_mean} and~\ref{dua_mean}, we can obtain the deterministic approximation for the expectation of any linear combination of the MIs. 
\end{remark}

\section{Asymptotic Joint MI Distribution}
\label{sec_clts}
The asymptotic distribution of a single MI has been proved to be Gaussian for both the LBI and double-scattering cases~\cite{hachem2008new, zhang2022asymptotic,zhang2022outage}. In this part, we will prove that the asymptotic joint distribution of the MIs is a joint Gaussian by setting up two CLTs. For ease of illustration, some notations are defined in Table~\ref{var_list}.
 \begin{table*}[!htbp]
\centering
\caption{Table of Notations.}
\label{var_list}
\begin{tabular}{|cc|cc|cc|cc|}
\toprule
Notations& Expression &  Notations & Expression &Notations& Expression  \\
\midrule
$\nu_{R,k,l}$ & $\frac{1}{L}\Tr\bold{R}_{k}\BG_{R,k}\FR_l\BG_{R,l}$
&
$\eta_{R,k,k,l}$& $\frac{1}{L}\Tr\FR_{k}^2\BG_{R,k}^2\FR_l \BG_{R,l}$
&
$A_{b,k}$ &  $A_{b,k,k}$
\\
$\nu_{S,k,l}$& $\frac{1}{M}\Tr\FS_k\BG_{S,k}\FS_l \BG_{S,l}$
&
$\eta_{T,k,k,l}$ & $\frac{1}{M}\Tr\bold{T}_k^2\BG_{T,k}^{2}\FT_l\BG_{T,l}$
&
 $\Delta_{S,k,l}$  & $1-\nu_{S,k,l}\nu_{T,k,l}$
\\
$\nu_{S,I,k,l}$& $\frac{1}{M}\Tr\FSr_k\FSl_l\BG_{S,l}\BG_{S,k}$
&
 $\eta_{S,I,k,k,l}$ &  $\frac{1}{M}\Tr\FS_{k}^2\BG_{S,k}^{2}\FS_l\BG_{S,l}$
 &
$\Delta_{k,l}$ & $1-\frac{M\overline{\omega}_{k}\overline{\omega}_{l}\nu_{R,k,l}\nu_{S,k,l}}{L\delta_{k}\delta_{l}}-\frac{M\nu_{R,k,l}\nu_{S,I,k,l}^2\nu_{T,k,l}}{L\delta_{k}^2\delta_{l}^2\Delta_{S,k,l}}$
  \\
  $\nu_{T,k,l}$& $\frac{1}{M}\Tr\FT_k\BG_{T,k}\FT_l\BG_{T,l}$
&
$\eta_{S,k,k,l}$  & $\frac{1}{M}\Tr\FS_{k}^2\BG_{S,k}^{2}\FS_l\BG_{S,l}$ 
  &
  $  \Xi_{k,l}$ &$ 1-\gamma_{R,k,l}\gamma_{T,k,l}$
\\
  $  \gamma_{R,k,l}$ &$ \frac{1}{M}\Tr\FR_k \bold{L}_{R,k} \FR_l \bold{L}_{R,l}$
&
 $  \gamma_{T,k,l}$ &$ \frac{1}{M}\Tr\FT_k \bold{L}_{T,k} \FT_l \bold{L}_{T,l}$
&
&
 \\
\bottomrule
\end{tabular}
\end{table*}

\label{clt_the}
\begin{theorem} 
\label{CLT_MIS}
(The asymptotic joint distribution of the MIs for double-scattering channel) Given assumptions \textbf{A.1}-\textbf{A.3} and a sequence of $K$ MIs, i.e., $C_{k}=\log\det(z_k \bold{I}_{N_{k}}+\BH_{k}\BH^{H}_{k})$, for $k=1,,2,...,K$ with $\BH_{k}=\FR_{k}^{\frac{1}{2}}\BX_{k}\bold{S}_{k}^{\frac{+}{2}}\BY\FT_{k}^{\frac{1}{2}}$, there holds true that
\label{CLT_MI}
\begin{equation}
\begin{aligned}
(C_1-\overline{C}_{1},...,C_{K}-\overline{C}_{K})
 \xrightarrow[N \rightarrow \infty]{\mathcal{D}}  \mathcal{N}(\bold{0}_{K},\bold{M}).
\end{aligned}
\end{equation}
Here, $\overline{C}_{k}$ is determined by
\begin{equation}
\overline{C}_{k}=\overline{C}(z_k,\FR_k,\FS_k,\FT_k),
\end{equation}
which is given in~(\ref{the_mean}). The $(i,j)$-th entry of $\BM$, which represents the asymptotic covariance between $C_i$ and $C_j$, can be expressed as 
\begin{equation}
\label{doub_cor}
[\BM]_{i,j}=-\mathbbm{1}_{\{\BX_{i}=\BX_{j}\}}\log(\Delta_{i,j})-\log(\Delta_{S,i,j})
\end{equation}
where $\Delta_{i,j}$ and $\Delta_{S,i,j}$ are given in Table~\ref{var_list}.
\end{theorem}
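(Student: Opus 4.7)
The plan is to establish the joint CLT by combining the Cramér--Wold device with the characteristic function / interpolation technique used in the single-MI cases of \cite{hachem2008new} and \cite{zhang2022asymptotic}. Fix an arbitrary vector $\bold{u}=(u_1,\ldots,u_K)\in\R^K$ and consider the scalar statistic $\xi_N=\sum_{k=1}^{K}u_k(C_k-\overline{C}_k)$. By Lemma~\ref{dua_mean}, $\E C_k=\overline{C}_k+\mathcal{O}(1/M)$, so this differs from $\sum_k u_k(C_k-\E C_k)$ by a vanishing deterministic term; it therefore suffices to show $\xi_N\xrightarrow[N\to\infty]{\mathcal{D}}\mathcal{N}(0,\bold{u}^T\bold{M}\bold{u})$ for every such $\bold{u}$, which by Cramér--Wold yields the joint CLT.

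The second step is to reduce each $C_k-\E C_k$ to a fluctuation of resolvent traces. Using the identity $\log\det(z_k\BI_{N_k}+\BH_k\BH_k^H)=\int_{z_k}^{\infty}\Tr[w^{-1}\BI_{N_k}-\BQ_k(w)]\,dw+\,\text{const.}$ where $\BQ_k(w)=(w\BI_{N_k}+\BH_k\BH_k^H)^{-1}$, I can write $\xi_N$ as a $w$-integral of centered traces $\Tr\BQ_k(w)-\E\Tr\BQ_k(w)$. Differentiating the characteristic function $\psi_N(v)=\E e^{\Iunit v\xi_N}$ in $v$ then reduces the problem to evaluating $\E[\xi_N e^{\Iunit v\xi_N}]$ asymptotically, and the goal becomes to show that $\E[\xi_N^2]$ converges to $\bold{u}^T\bold{M}\bold{u}$ while higher cumulants vanish; Poincaré--Nash variance bounds applied separately to $\BY$ and to $\{\BX_k\}$ control the error terms and guarantee that only the second cumulant survives.

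The heart of the proof is the computation of $\lim_N\cov(C_i,C_j)$, carried out by repeated Gaussian integration by parts on $\BX_i,\BX_j$ and on $\BY$. Applied to $\E[\Tr(\BA\BQ_i)\,\underline{\Tr(\BB\BQ_j)}]$ for arbitrary deterministic $\BA,\BB$, the $\BY$-IBP always produces a coupling (because $\BY$ is common to every $\BH_k$), whereas the $\BX$-IBP produces an additional coupling only when $\BX_i=\BX_j$. Each application generates correlation traces on the spaces $\C^{N_k}$, $\C^L$ and $\C^M$ of the form $\nu_{R,i,j},\nu_{S,i,j},\nu_{T,i,j},\nu_{S,I,i,j}$ (cf.\ Table~\ref{var_list}), together with remainders that I control by the Poincaré--Nash inequality and the spectral bounds of Assumption~\textbf{A.2}. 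Collecting the leading terms yields a small linear system whose solution introduces the denominators $\Delta_{S,i,j}$ (always) and $\Delta_{i,j}$ (when $\BX_i=\BX_j$); the remaining $w$-integration along $[z_i,\infty)\times[z_j,\infty)$ then produces the logarithms, giving precisely formula~(\ref{doub_cor}).

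I expect the main obstacle to be this last bookkeeping step: four coupled resolvent bilinear traces (one per correlation matrix $\FR_k,\FS_k,\FT_k$ and the off-diagonal $\nu_{S,I,i,j}$ coming from $\FSr_k\FSl_l\neq\FS_k\delta_{kl}$) must be tracked simultaneously, their deterministic equivalents must be identified through the fixed-point equations~(\ref{basic_eq1}), and the resulting $2\times 2$ (or larger) linear system has to be inverted analytically to match the $\Delta$-denominators of Table~\ref{var_list}. Once this bilinear-trace calculus is established, verifying that the integrand is a logarithmic derivative of $\Delta_{i,j}\Delta_{S,i,j}$ in $(z_i,z_j)$, and that higher cumulants are $o(1)$, becomes a standard (if tedious) extension of the arguments in \cite{hachem2008new,zhang2022asymptotic} to the multivariate setting.
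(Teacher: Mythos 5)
Your proposal is correct and follows essentially the same route as the paper's proof: a characteristic-function CLT driven by Gaussian integration by parts and the Nash--Poincar\'e inequality, reduction of the centered MIs to centered resolvent traces $\underline{\Tr\BQ_k}$, and a covariance computation in which the shared $\BY$ always couples the terms while the $\BX$-coupling appears only when $\BX_i=\BX_j$, producing the $\Delta_{S,i,j}$ and $\Delta_{i,j}$ denominators and hence $[\BM]_{i,j}=-\mathbbm{1}_{\{\BX_i=\BX_j\}}\log(\Delta_{i,j})-\log(\Delta_{S,i,j})$. The only (inessential) difference is that the paper works directly with the joint characteristic function $\Psi(\bold{u},\bold{z})$ and differentiates it in $z_k$, obtaining $\frac{\partial\Psi}{\partial z_k}\approx-\tfrac12\bold{u}^T\frac{\partial\BM}{\partial z_k}\bold{u}\,\Psi$ and integrating in $z$, whereas you project via Cram\'er--Wold and differentiate in the Fourier variable; the heavy bookkeeping of the coupled bilinear resolvent traces that you flag as the main obstacle is exactly the content of the paper's Lemma~3 and the $V_{l,\cdot}$/$W_l$ expansions in Appendix~A.
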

\begin{proof}
The proof of Theorem~\ref{CLT_MI} is given in Appendix~\ref{dua_clt_proof}.
\end{proof}

\begin{theorem} 
\label{single_clt}
(The asymptotic joint distribution of the MIs for LBI channel) Given assumptions \textbf{A.1}-\textbf{A.3} and a sequence of $K$ MIs, i.e., $D_{k}=\log\det(z_k \bold{I}_{N_{k}}+\BH_{k}\BH^{H}_{k})$, for $k=1,2,...,K$ with $\BH_{k}=\FR_{k}^{\frac{1}{2}}\BX_{k}\FT_{k}^{\frac{1}{2}}$, there holds true that
\begin{equation}
\begin{aligned}
(D_{1}-\overline{D}_{1},...,D_{K}-\overline{D}_{K})
 \xrightarrow[N \rightarrow \infty]{\mathcal{D}}  \mathcal{N}(\bold{0}_{K},\BF).
\end{aligned}
\end{equation}
Here, $\overline{D}_{k}$ is determined by
\begin{equation}
\overline{D}_k=\overline{D}(z_k,\FR_k,\FT_k),
\end{equation}
which is given in~(\ref{dz_mean}). The $(i,j)$-th entry of $\BF$, which represents the asymptotic covariance between $D_i$ and $D_j$, can be expressed as 
\begin{equation}
\label{sing_F}
[\BF]_{i,j}=-\mathbbm{1}_{\{\BX_{i}=\BX_{j}\}}\log(\Xi_{i,j}),
\end{equation}
where $\Xi_{i,j}$ is given in Table~\ref{var_list}.
\end{theorem}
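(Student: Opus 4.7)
The plan is to prove the joint CLT via the Cram\'er--Wold device. Specifically, it suffices to show that for any real vector $(a_1,\dots,a_K)$, the scalar
\[
S \;=\; \sum_{k=1}^{K} a_k\bigl(D_k-\overline{D}_k\bigr)
\]
converges in distribution to $\mathcal{N}(0,\sigma^2)$ with $\sigma^2=\sum_{i,j} a_i a_j [\BF]_{i,j}$. Because independent summands contribute no covariance asymptotically, the problem splits along the equivalence classes of $\{\BX_k\}$: MIs built from independent Gaussian blocks are asymptotically independent (the existing single-variate CLT of~\cite{hachem2008new} applied block-wise), consistent with the indicator $\mathbbm{1}_{\{\BX_i=\BX_j\}}$ in~(\ref{sing_F}). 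Hence the substantive task is to handle a linear combination of MIs all sharing the same $\BX$, after which the general statement follows by combining independent blocks.

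For the shared-$\BX$ case, I would follow the characteristic-function ODE technique of~\cite{hachem2008new}. Introduce the resolvents $\BQ_k(\omega)=(\omega \bold{I}_{N_k}+\BH_k\BH_k^H)^{-1}$ and exploit the identity
\[
D_k - \E D_k \;=\; -\int_{z_k}^{+\infty}\bigl[\tr\BQ_k(\omega)-\E\tr\BQ_k(\omega)\bigr]\,d\omega,
\]
obtained from $\tfrac{d}{d\omega}\log\det(\omega \bold{I}_{N_k}+\BH_k\BH_k^H)=\tr\BQ_k(\omega)$ together with the vanishing of the centered log-det at infinity. Since $\overline{D}_k=\E D_k+\mathcal{O}(1/M)$ by Lemma~\ref{sing_mean}, $S$ equals, up to a negligible error, an integral of centered resolvent traces. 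Its characteristic function $\phi(u)=\E\exp(\Iunit u S)$ is analyzed by differentiating in $u$ and invoking the Gaussian integration-by-parts (IBP) formula on the entries of the common matrix $\BX$, with residuals controlled by the Nash--Poincar\'{e} inequality. Iterating these steps yields an ODE of the form $\phi'(u)=-u\sigma^2\phi(u)+o(1)$, whose unique bounded solution with $\phi(0)=1$ is the characteristic function of $\mathcal{N}(0,\sigma^2)$.

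The computation that identifies $\sigma^2$ reduces to evaluating
\[
\Gamma_{i,j}(\omega_i,\omega_j)\;=\;\cov\bigl(\tr\BQ_i(\omega_i),\tr\BQ_j(\omega_j)\bigr)
\]
and integrating twice to recover $\cov(D_i,D_j)$. Applying Gaussian IBP once to $\tr\BQ_i(\omega_i)\tr\BQ_j(\omega_j)$ and simplifying with the resolvent identity expresses $\Gamma_{i,j}$ as a linear combination of the bilinear traces $\tfrac{1}{M}\tr\FR_i\bold{L}_{R,i}\FR_j\bold{L}_{R,j}=\gamma_{R,i,j}$ and $\tfrac{1}{M}\tr\FT_i\bold{L}_{T,i}\FT_j\bold{L}_{T,j}=\gamma_{T,i,j}$, up to $o(1)$ terms. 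Solving the resulting $2\times 2$ linear system for the leading part of $\Gamma_{i,j}$ yields a rational expression in $(\omega_i,\omega_j)$ whose double integration over $[z_i,\infty)\times[z_j,\infty)$ collapses (via the standard antiderivative $\int\!\!\int \partial_{\omega_i}\partial_{\omega_j}\log(1-\gamma_{R,i,j}\gamma_{T,i,j})\,d\omega_i d\omega_j$) to $-\log(1-\gamma_{R,i,j}\gamma_{T,i,j})=-\log\Xi_{i,j}$, matching~(\ref{sing_F}).

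The main obstacle is this off-diagonal covariance step. Although the single-variate variance $-\log\Xi_{k,k}$ is already established in~\cite{hachem2008new}, extending the bilinear-form calculation to $i\neq j$ requires careful bookkeeping of cross resolvent traces built from \emph{two distinct} deterministic-equivalent pairs $(\alpha_i,\overline{\alpha}_i)$ and $(\alpha_j,\overline{\alpha}_j)$, and it requires verifying that the IBP error terms remain $o(1)$ \emph{uniformly} in $(\omega_i,\omega_j)$ along the infinite integration rays so that interchanging the limit $M\to\infty$ with the double integral is legitimate. Once this uniform control and the antiderivative identification are in place, the Cram\'{e}r--Wold step assembles the joint Gaussianity with covariance~(\ref{sing_F}), completing the proof.
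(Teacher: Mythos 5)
Correct, and essentially the paper's own route: the paper proves this theorem by the same Gaussian-tools characteristic-function method it uses for Theorem~\ref{CLT_MIS} in Appendix~\ref{dua_clt_proof} (deferring the computations to \cite{hachem2008new}), i.e., Gaussian integration by parts plus the Nash--Poincar\'e inequality applied to centered resolvent traces, with covariance $-\log\Xi_{i,j}$ exactly when two MIs share the same $\BX$ and zero otherwise. The only organizational difference is that the paper differentiates the multivariate characteristic function $\Psi(\mathbf{u},\mathbf{z})$ with respect to $z_k$ and integrates back, whereas you use Cram\'er--Wold with an ODE in $u$ and the $\omega$-integral representation of the log-determinant --- an equivalent packaging of the same computation.
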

\begin{proof} The proof is simpler than that of Theorem~\ref{CLT_MIS} and the computation process can be found in~\cite{hachem2008new}. We omit the proof here.
\end{proof}
\begin{remark} Theorem~\ref{CLT_MIS} indicates that the asymptotic joint distribution of the MIs is a Gaussian distribution. $[\BM]_{i,i}$ denotes the variance of $C_{i}$ and $[\BM]_{i,j},~i\neq j$ represents the covariance between $C_i$ and $C_{j}$. When $K=1$, the CLT is equivalent to the results in~\cite[Theorem~2]{zhang2022asymptotic} and~\cite[Theorem~2]{zhang2022outage}. In Theorem~\ref{CLT_MIS}, all the MIs are correlated since they share the same $\BY$ and the entries of the covariance matrix can be categorized into two types: 1. $\BH_i$ and $\BH_j$ share the same $\BX$, which results in a larger covariance; 2. $\BH_i$ and $\BH_j$ have independent $\BX$, which leads to a smaller covariance. In Theorem~\ref{single_clt}, there is only one form of non-zero covariance, i.e., when $\BX_i$ and $\BX_j$ are identical.
\end{remark}

\section{Ergodic Secrecy Rate}
\label{sec_esr}
In this section, we will give the closed-form approximations for the ESR of IRS-aided MIMO channels.
\begin{theorem} (ESR of Wiretap Systems)
\label{mean_NA}
The secrecy rate of the MIMOME wiretap system, over double-scattering and LBI channels, can be evaluated by
\begin{equation}
\begin{aligned}
{C}_{S}(\BP_W)=\overline{C}_{S}(\BP_W)+\BO(\frac{1}{M}),
\\
{D}_{S}(\BP_W)=\overline{D}_{S}(\BP_W)+\BO(\frac{1}{M}),
\end{aligned}
\end{equation}
respectively, where  
\begin{equation}
\label{esr_wire}
\begin{aligned}
&\overline{C}_{S}(\BP_W)=\lceil
\overline{C}(\sigma^2_{B},\FR_{B},\FS_{B},\RT\BP_{W}\RT)
\\
&
-\overline{C}(\sigma^2_{E},\FR_{E},\FS_{E},\RT\BP_W\RT)
\\
&-N_{B}\log(\sigma^2_{B})
+N_{E}\log(\sigma^2_{E}) \rceil^{+},
\\
&\overline{D}_{S}(\BP_W)=
\lceil 
\overline{D}(\sigma^2_{B},\FR_{B}, \FT^{\frac{+}{2}}_{B}\BP_W\FT^{\frac{-}{2}}_{B})
\\
&
-\overline{D}(\sigma^2_{E},\FR_{E},\FT^{\frac{+}{2}}_{E}\BP_W\FT^{\frac{-}{2}}_{E})
\\
&
-N_{B}\log(\sigma^2_{B})
+N_{E}\log(\sigma^2_{E})
 \rceil^{+}.
\end{aligned}
\end{equation}
\end{theorem}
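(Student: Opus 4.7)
The strategy is to reduce Theorem~\ref{mean_NA} to Lemmas~\ref{sing_mean} and~\ref{dua_mean} by rewriting each expected MI $\E M(\sigma^{2}_{k},\BH_{k},\BP_{W})$ as the ergodic rate of a standard Kronecker channel whose transmit correlation absorbs $\BP_{W}$, exactly as anticipated by the remark following Lemma~\ref{dua_mean}. Once the two expectations for $k\in\{B,E\}$ are replaced by their deterministic equivalents, the theorem follows by linearity and the fact that the positive part operation $\lceil\cdot\rceil^{+}$ is $1$-Lipschitz, so it preserves the $\mathcal{O}(1/M)$ error order.

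For the double-scattering channel I would use the identity
\begin{equation*}
\BH_{k}\BP_{W}\BH_{k}^{H}=\LR_{k}\BX_{k}\FS^{\frac{+}{2}}_{k}\BY(\RT\BP_{W}\RT)\BY^{H}\FS^{\frac{-}{2}}_{k}\BX_{k}^{H}\LRh_{k},
\end{equation*}
which shows that $\log\det(\sigma^{2}_{k}\bold{I}+\BH_{k}\BP_{W}\BH_{k}^{H})$ has the same distribution as the MI of a double-scattering channel with effective transmit correlation $\FT'_{k}=\RT\BP_{W}\RT$. Invoking Lemma~\ref{dua_mean} with $\FT_{k}$ replaced by $\FT'_{k}$ yields $\E M(\sigma^{2}_{k},\BH_{k},\BP_{W})=\overline{C}(\sigma^{2}_{k},\FR_{k},\FS_{k},\RT\BP_{W}\RT)+\mathcal{O}(1/M)$. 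The LBI case is parallel: the identity
\begin{equation*}
\BH_{k}\BP_{W}\BH_{k}^{H}=\LR_{k}\BX_{k}(\FT^{\frac{+}{2}}_{k}\BP_{W}\FT^{\frac{-}{2}}_{k})\BX_{k}^{H}\LRh_{k}
\end{equation*}
together with Lemma~\ref{sing_mean} gives $\E M(\sigma^{2}_{k},\BH_{k},\BP_{W})=\overline{D}(\sigma^{2}_{k},\FR_{k},\FT^{\frac{+}{2}}_{k}\BP_{W}\FT^{\frac{-}{2}}_{k})+\mathcal{O}(1/M)$. Subtracting the Eve expression from the Bob expression, appending the deterministic noise terms $-N_{B}\log\sigma^{2}_{B}+N_{E}\log\sigma^{2}_{E}$, and applying $\lceil\cdot\rceil^{+}$ then reproduces~(\ref{esr_wire}) to within the claimed accuracy.

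The main (and relatively mild) technical obstacle is verifying that the effective transmit correlation matrices $\RT\BP_{W}\RT$ and $\FT^{\frac{+}{2}}_{k}\BP_{W}\FT^{\frac{-}{2}}_{k}$ still satisfy assumptions~\textbf{A.1}--\textbf{A.3}, which is what legitimizes the invocation of the two lemmas after substitution. Under a standard total-power budget on $\BP_{W}$ together with the uniform spectral-norm bounds already imposed in~\textbf{A.2} on $\FT$, $\FS_{k}$, $\FR_{k}$, and $\BH_{T,0}$, submultiplicativity of the spectral norm and the trace inequality deliver these bounds directly, so the argument is primarily a bookkeeping exercise rather than a genuine technical hurdle.
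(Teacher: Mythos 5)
Your proposal matches the paper's argument: the paper proves this result exactly by the substitution $\FT'_k=\RT\BP_W\RT$ (double-scattering) and $\FT^{\frac{+}{2}}_{k}\BP_W\FT^{\frac{-}{2}}_{k}$ (LBI) anticipated in the remark after Lemma~\ref{dua_mean}, replacing each expected MI by its deterministic equivalent from Lemmas~\ref{sing_mean} and~\ref{dua_mean} and combining by linearity (the wiretap case being the AN case with $\BP_V=\bold{0}$). Your added remarks on the Lipschitz continuity of $\lceil\cdot\rceil^{+}$ and on checking \textbf{A.1}--\textbf{A.3} for the effective transmit correlations are harmless bookkeeping the paper leaves implicit, not a different route.
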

The ESR of the AN-aided system is give by the following theorem.
\begin{theorem} (ESR of AN-aided Systems)
\label{mean_AN}
The secrecy rate of IRS-aided MIMO systems with AN, over double scattering and LBI channels, can be evaluated by
\begin{equation}
\begin{aligned}
\label{esr_an}
{C}_{AN}(\BP_W,\BP_V)=\overline{C}_{AN}(\BP_W,\BP_V)+\BO(\frac{1}{M}),
\\
{D}_{AN}(\BP_W,\BP_V)=\overline{D}_{AN}(\BP_W,\BP_V)+\BO(\frac{1}{M}),
\end{aligned}
\end{equation}
where 
\begin{equation}
\begin{aligned}
\label{an_esr_exp}
&\overline{C}_{AN}(\BP_W,\BP_V)=\lceil
\overline{C}(\sigma^2_{B},\FR_{B},\FS_{B},\RT\BP_{U}\RT)
\\
&-\overline{C}(\sigma^2_{B},\FR_{B},\FS_{B},\RT\BP_{V}\RT)
\\
&
-\overline{C}(\sigma^2_{E},\FR_{E},\FS_{E},\RT\BP_{U}\RT)
\\
&
+\overline{C}(\sigma^2_{E},\FR_{E},\FS_{E},\RT\BP_{V}\RT) \rceil^{+},
\\
&\overline{D}_{AN}(\BP_W,\BP_V)=\lceil
\overline{D}(\sigma^2_{B},\FR_{B},\FT^{\frac{+}{2}}_{B}\BP_U\FT^{\frac{-}{2}}_{B})
\\
&
-\overline{D}(\sigma^2_{B},\FR_{B},\FT^{\frac{+}{2}}_{B}\BP_V\FT^{\frac{-}{2}}_{B})
-\overline{D}(\sigma^2_{E},\FR_{E},\FT^{\frac{+}{2}}_{E}\BP_U\FT^{\frac{-}{2}}_{E})
\\
&
+\overline{D}(\sigma^2_{E},\FR_{E},\FT^{\frac{+}{2}}_{E}\BP_V\FT^{\frac{-}{2}}_{E})
 \rceil^{+}.
\end{aligned}
\end{equation}
\end{theorem}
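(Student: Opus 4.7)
The plan is to reduce the AN-aided MI $M_{I,k}(\BP_W,\BP_V)$ to a difference of two standard MIs of the form $M(z,\BH,\BP)$, so that the whole secrecy rate becomes a linear combination of four such statistics, and then to apply the deterministic-equivalent results of Lemmas~\ref{sing_mean} and~\ref{dua_mean} term by term.

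First, I would invoke the determinant identity
\begin{equation*}
\det\!\left(\bold{I}_{N_k}+\BH_k\BP_W\BH_k^{H}\FU_k(\BP_V)\right)=\frac{\det(\sigma_k^{2}\bold{I}_{N_k}+\BH_k\BP_U\BH_k^{H})}{\det(\sigma_k^{2}\bold{I}_{N_k}+\BH_k\BP_V\BH_k^{H})},
\end{equation*}
with $\BP_U=\BP_W+\BP_V$, which follows from multiplying out $\FU_k(\BP_V)^{-1}+\BH_k\BP_W\BH_k^{H}$ inside the outer determinant. Taking logarithms converts $M_{I,k}(\BP_W,\BP_V)$ into $M(\sigma_k^{2},\BH_k,\BP_U)-M(\sigma_k^{2},\BH_k,\BP_V)$, i.e., exactly the decomposition \eqref{AN_MIS}. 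By linearity, $\E F_{I,B,E}(\BP_W,\BP_V)$ becomes a sum of four expectations of the standard form $\E M(z,\BH,\BP)$, one for each pair in $\{\BP_U,\BP_V\}\times\{B,E\}$.

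Next, for the double-scattering case I apply Lemma~\ref{dua_mean} to each of these four expectations using the substitution described in the Remark after that lemma. For $\BH_k=\LR_k\BX_k\FS^{\frac{+}{2}}_k\BY\RT$ and Hermitian $\BP$, one has
\begin{equation*}
\BH_k\BP\BH_k^{H}=\LR_k\BX_k\FS^{\frac{+}{2}}_k\BY(\RT\BP\RT)\BY^{H}\FS^{\frac{-}{2}}_k\BX_k^{H}\LRh_k,
\end{equation*}
so $M(z,\BH_k,\BP)$ agrees in distribution with a standard-form MI after the replacement $\FT\mapsto\RT\BP\RT$ in the fixed-point system \eqref{basic_eq1} and in the mean formula \eqref{the_mean}. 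Performing this substitution for the four precoder--user pairs produces exactly the four $\overline{C}$-terms appearing in \eqref{an_esr_exp}, each carrying its own $\BO(1/M)$ remainder. The LBI case is handled in the same way using Lemma~\ref{sing_mean} and the substitution $\FT_k\mapsto\FT^{\frac{+}{2}}_k\BP\FT^{\frac{-}{2}}_k$; since $\FT^{\frac{-}{2}}_k=(\FT^{\frac{+}{2}}_k)^{H}$, this matrix is Hermitian positive semi-definite whenever $\BP$ is, and summing the four $\overline{D}$-terms yields the claimed expression.

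Finally, because the sum contains only four terms, the aggregated approximation error is still $\BO(1/M)$, and the Lipschitz bound $|\lceil x\rceil^{+}-\lceil y\rceil^{+}|\le|x-y|$ transports this order through the outer $\lceil\cdot\rceil^{+}$. The main technical point to verify, though routine, is that the substituted ``correlation'' matrices $\RT\BP_U\RT$, $\RT\BP_V\RT$ (respectively $\FT^{\frac{+}{2}}_k\BP\FT^{\frac{-}{2}}_k$) still satisfy Assumptions~\textbf{A.2}--\textbf{A.3}, so that the fixed-point systems \eqref{basic_eq1} and \eqref{sing_eq1} remain well posed and the two lemmas apply verbatim. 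Under any standard power-budget hypothesis bounding the spectral norms and normalized traces of $\BP_W,\BP_V$ away from $0$ and $\infty$, this is immediate, and no new estimate beyond the cited lemmas is needed.
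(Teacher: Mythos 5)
Your proposal is correct and follows essentially the same route as the paper: the decomposition of $F_{I,B,E}$ into the four standard MIs with $\BP_U=\BP_W+\BP_V$ is exactly the paper's equation~(\ref{AN_MIS}), and the proof then replaces each term by its deterministic approximation via Lemmas~\ref{sing_mean} and~\ref{dua_mean} with the substitution noted in the remark after Lemma~\ref{dua_mean}. Your extra steps (the determinant identity, the Lipschitz bound for $\lceil\cdot\rceil^{+}$, and checking \textbf{A.2}--\textbf{A.3} for the substituted matrices) merely make explicit what the paper leaves implicit.
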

\begin{proof} We only need to replace the four MIs in~(\ref{ESR_AN}) by their deterministic approximations according to Lemmas~\ref{sing_mean} and~\ref{dua_mean}.
\end{proof}
\begin{remark} The wiretap model is a special case of the AN-aided model as Theorem~\ref{mean_NA} is equivalent to Theorem~\ref{mean_AN} when $\BP_{V}=\bold{0}$.
\end{remark}

\section{Secrecy Outage Probability}
\label{sec_sop}
In this section, we will give the closed-form expressions for the SOPs based on Theorems~\ref{CLT_MIS} and~\ref{single_clt} in Section~\ref{clt_the}.
\begin{proposition} 
\label{s_sop}
The SOP of the IRS-aided wiretap MIMO system can be approximated by
\begin{equation}
P_{S,out}(R)\approx\Phi(\frac{R-\overline{R}_{S}}{\sqrt{V_S}}),
\end{equation}
where
\begin{equation}
\label{VS_exp}
V_{S}=
\begin{cases}
&-\log(\Delta_{BW})-\log(\Delta_{EW})
\\
&
+2\log(\Delta_{BW,EW}),~\text{double-scattering},
\\
&-\log(\Xi_{BW})-\log(\Xi_{EW}),~\text{LBI}.
\end{cases}
\end{equation}
The ESR $\overline{R}$ of the double-scattering and LBI chanel are given in~(\ref{esr_wire}). The definition of the subscript in~(\ref{VS_exp}) is given in~(\ref{subscri}).
\end{proposition}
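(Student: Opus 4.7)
The plan is to reduce the SOP to a one‑dimensional Gaussian tail probability by applying the joint CLTs established in Section~\ref{clt_the} to the pair $(M_{BW},M_{EW})$, and then to read off the mean and variance directly from the earlier deterministic equivalents and covariance formulas. First, I would unpack the definition of $P_{S,out}(R)$ and observe that the random content of $F_{B,E}(\BP_W)-N_B\log(\sigma_B^2)+N_E\log(\sigma_E^2)$ is precisely $M_{BW}-M_{EW}$, while the logarithmic noise terms are deterministic constants that just shift the mean. Thus
\begin{equation*}
P_{S,out}(R)=\Prob\bigl(M_{BW}-M_{EW}<R+N_B\log\sigma_B^2-N_E\log\sigma_E^2\bigr).
\end{equation*}

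Next, I would invoke Theorem~\ref{CLT_MIS} (double‑scattering) or Theorem~\ref{single_clt} (LBI) for the two‑MI vector $(M_{BW},M_{EW})$ to conclude that it converges jointly in distribution to a bivariate Gaussian. Since linear combinations of jointly Gaussian vectors remain Gaussian, $M_{BW}-M_{EW}$ is asymptotically Gaussian with mean $\overline{C}(\sigma_B^2,\FR_B,\FS_B,\RT\BP_W\RT)-\overline{C}(\sigma_E^2,\FR_E,\FS_E,\RT\BP_W\RT)$ (respectively with $\overline{D}(\cdot)$ in the LBI case). Combined with the shifts $-N_B\log\sigma_B^2+N_E\log\sigma_E^2$, this mean coincides with $\overline{R}_S$ as written in~(\ref{esr_wire}) (the positive‑part operator is inactive once one assumes the system operates in a regime of strictly positive secrecy, which is the relevant setting for SOP). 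The variance is
\begin{equation*}
V_S=\Var(M_{BW})+\Var(M_{EW})-2\cov(M_{BW},M_{EW}),
\end{equation*}
so the remaining work is to evaluate these three quantities using~(\ref{doub_cor}) or~(\ref{sing_F}).

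The key observation for the variance is the structure of the indicator $\mathbbm{1}_{\{\BX_i=\BX_j\}}$. Because Bob and Eve have statistically independent small‑scale fading matrices $\BX_B$ and $\BX_E$, this indicator is one on the diagonal entries $(BW,BW)$ and $(EW,EW)$ but vanishes for the off‑diagonal entry $(BW,EW)$. In the double‑scattering case this means $[\BM]_{BW,BW}$ picks up both the $\Delta_{BW,BW}$ and $\Delta_{S,BW,BW}$ contributions (which together are encoded in the shorthand $\Delta_{BW}$), while $[\BM]_{BW,EW}$ retains only the $\Delta_{S,BW,EW}$ factor induced by the shared $\BY$. Collecting the three terms reproduces $-\log\Delta_{BW}-\log\Delta_{EW}+2\log\Delta_{BW,EW}$. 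In the LBI case there is no shared $\BY$, so $[\BF]_{BW,EW}=0$ and only the two diagonal terms $-\log\Xi_{BW}-\log\Xi_{EW}$ remain, matching~(\ref{VS_exp}).

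The final step is just standardization: apply the Gaussian CDF $\Phi$ to the shifted and scaled inequality to obtain $P_{S,out}(R)\approx\Phi((R-\overline{R}_S)/\sqrt{V_S})$. I do not anticipate any serious obstacle in this derivation because all the heavy lifting has already been done in Theorems~\ref{CLT_MIS} and~\ref{single_clt}; the proof is essentially bookkeeping. The subtlest point—and the only place where one has to be careful—is the correct accounting of the covariance structure in~(\ref{doub_cor}), in particular confirming that the shorthand $\Delta_{BW}$ bundles the on‑diagonal $\Delta_{BW,BW}$ and $\Delta_{S,BW,BW}$ factors together while $\Delta_{BW,EW}$ denotes only the surviving $\Delta_{S,BW,EW}$ contribution, so that the asymmetric look of~(\ref{VS_exp}) is a consistent consequence of the indicator vanishing off the diagonal.
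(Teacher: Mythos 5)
Your proposal is correct and follows essentially the same route as the paper: invoke Theorem~\ref{CLT_MIS} (resp.\ Theorem~\ref{single_clt}) for the pair $(C_{BW},C_{EW})$, take the linear combination with $\bold{u}=(1,-1)^{T}$ so that $V_S=\bold{u}^{T}\BM\bold{u}$, and standardize to get $\Phi\bigl((R-\overline{R}_S)/\sqrt{V_S}\bigr)$. Your explicit accounting of the indicator $\mathbbm{1}_{\{\BX_i=\BX_j\}}$ vanishing off the diagonal (so only the shared-$\BY$ term $\Delta_{S,BW,EW}$ survives in the cross-covariance, and the LBI cross-covariance is zero) matches the paper's computation, just stated more carefully than the paper's own shorthand.
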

\begin{proof}Here we only prove the double-scattering case since the LBI case can be proved similarly. Given the Gaussianity proved by Theorem 1 and the ESR in~(\ref{esr_wire}), we only need to determine the variance of the secrecy rate. According to~(\ref{C_inf1}), we need to determine the variance of $C_{BW}-C_{EW}$. For that purpose, we first derive the distribution of $C_{BW}-C_{EW}$. By Theorem~\ref{CLT_MI}, we know that the joint distribution of $(C_{BW},C_{EW})$ converges to a joint Gaussian distribution, whose covariance matrix is given by
\begin{equation}
\BM=
\begin{bmatrix}
-\log(\Delta_{BW,BW})     & -\log(\Delta_{S,BW,EW}) \\
-\log(\Delta_{S,BW,EW}) & -\log(\Delta_{EW,EW})
\end{bmatrix}.
\end{equation}
Therefore, the variance of $C_{B}-C_{E}$ is given by $\bold{u}^{T}\BM\bold{u}$, where $\bold{u}=(1,-1)^{T}$.
\end{proof}

\begin{proposition} 
\label{an_sop}
The SOP of the AN-aided system over double-scattering channel can be approximated by
\begin{equation}
P_{AN,out}(R)\approx \Phi(\frac{R-\overline{R}_{AN}}{\sqrt{V_{AN}}}),
\end{equation}
where 
\begin{equation}
V_{AN}=\bold{u}^{T}\BM_{AN}\bold{u},
\end{equation}
 with $\bold{u}=(1,-1,-1,1)^{T}$. Here the covariance matrix between $C_{BU}$, $C_{BV}$, $C_{EU}$, and $C_{EV}$ is given by 
\begin{equation}
\begin{aligned}
\BM_{AN}&=
\begin{bmatrix}
\BC_{B,B}     & \BC_{B,E}  \\
\BC_{E,B}     & \BC_{E,E}
\end{bmatrix},
\end{aligned}
\end{equation}
where each sub-matrix in $\BM_{AN}$ can be determined as  
\begin{equation}
\label{BC_def}
\BC_{i,j}=
\begin{bmatrix}
[\BM]_{iU,iU}     & [\BM]_{iU,jV}  \\
[\BM]_{jV,iU}& [\BM]_{jV,jV}
\end{bmatrix},
~i,j=B,E,
\end{equation}
and $[\BM]_{i,j}$ is given in~(\ref{doub_cor}). The LBI case can be obtained by replacing $[\BM]_{i,j}$ with $[\BF]_{i,j}$ in~(\ref{sing_F}). The ESR $\overline{R}$ of the double-scattering and LBI cases are given in~(\ref{esr_an}). \end{proposition}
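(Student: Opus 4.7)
The plan is to mirror the argument used for Proposition~\ref{s_sop}, leveraging the multivariate CLT of Theorem~\ref{CLT_MIS} (and Theorem~\ref{single_clt} for the LBI case). First I would observe, from the decomposition~(\ref{AN_MIS}), that
\begin{equation*}
F_{I,B,E}(\BP_W,\BP_V)=M_{BU}-M_{BV}-M_{EU}+M_{EV}=\bold{u}^{T}(M_{BU},M_{BV},M_{EU},M_{EV})^{T},
\end{equation*}
with $\bold{u}=(1,-1,-1,1)^{T}$. This identification is the crucial step that reduces the SOP problem to a single linear functional of a jointly Gaussian vector.

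Next I would apply Theorem~\ref{CLT_MIS} to the four-dimensional MI vector $(C_{BU},C_{BV},C_{EU},C_{EV})$. The theorem gives joint asymptotic Gaussianity, so its centered version converges to $\mathcal{N}(\bold{0}_{4},\BM_{AN})$, with every entry of $\BM_{AN}$ computable from~(\ref{doub_cor}). The bookkeeping required is to track the $\BX$-sharing structure: $\BH_{BU}$ and $\BH_{BV}$ share the Bob-side matrix $\BX_{B}$, and $\BH_{EU},\BH_{EV}$ share $\BX_{E}$, so the indicator $\mathbbm{1}_{\{\BX_{i}=\BX_{j}\}}$ is active in those blocks. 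The cross Bob--Eve pairs have independent $\BX$s, so only the $-\log(\Delta_{S,i,j})$ term survives. This yields exactly the block structure with $\BC_{i,j}$ as in~(\ref{BC_def}).

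Since any linear functional of a joint Gaussian is a scalar Gaussian, $F_{I,B,E}(\BP_W,\BP_V)$ is asymptotically $\mathcal{N}(\overline{R}_{AN},V_{AN})$ with $V_{AN}=\bold{u}^{T}\BM_{AN}\bold{u}$, where the mean is delivered by Theorem~\ref{mean_AN}. The SOP approximation follows from
\begin{equation*}
P_{AN,out}(R)=\Prob\bigl(F_{I,B,E}(\BP_W,\BP_V)<R\bigr)\approx\Phi\!\left(\tfrac{R-\overline{R}_{AN}}{\sqrt{V_{AN}}}\right),
\end{equation*}
and the LBI case is handled identically by replacing~(\ref{doub_cor}) with its counterpart~(\ref{sing_F}) from Theorem~\ref{single_clt}.

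The main subtlety---rather than a genuine obstacle---is the combinatorial bookkeeping of which MI pairs share the same Gaussian matrix, so that the indicator in~(\ref{doub_cor}) is turned on in exactly the right off-diagonal positions of $\BM_{AN}$. Beyond this, the derivation is a one-step linear-algebra reduction on top of the already-proved joint CLT, and no new analytical machinery is needed. I would also briefly remark that the $\BO(1/M)$ error in the mean approximation from Theorem~\ref{mean_AN} is negligible relative to the $\sqrt{V_{AN}}=\BO(1)$ fluctuations, which is what justifies using $\overline{R}_{AN}$ rather than the true mean inside $\Phi(\cdot)$.
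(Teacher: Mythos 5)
Your proposal is correct and follows essentially the same route as the paper: the paper omits the proof of Proposition~\ref{an_sop}, noting it is analogous to Proposition~\ref{s_sop}, i.e., write $F_{I,B,E}$ as $\bold{u}^{T}(C_{BU},C_{BV},C_{EU},C_{EV})^{T}$, invoke Theorem~\ref{CLT_MIS} (or Theorem~\ref{single_clt} for LBI) for joint Gaussianity with covariance entries from~(\ref{doub_cor}), and take $V_{AN}=\bold{u}^{T}\BM_{AN}\bold{u}$. Your bookkeeping of which pairs share $\BX_B$ or $\BX_E$ (activating the indicator) versus the cross Bob--Eve pairs (only the $-\log(\Delta_{S,i,j})$ term) matches the intended construction, so nothing further is needed.
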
 
\begin{proof}  The proof is similar to that of Proposition~\ref{s_sop}, so we omit it here.
\end{proof}

\subsection{Multiple Multi-antenna Eavesdroppers}
We can extend the results in Propositions~\ref{s_sop} and~\ref{an_sop} to the scenario when there are multiple MEs. The outage probability with $K$ MEs is given as~\cite{multipleoutage}
\begin{equation}
\begin{aligned}
&P_{out}(R)=\Prob(C_{B}(\BP_W,\BP_V)
\\
&-\max_{k=1,2,...,K}C_{E}(\BP_W,\BP_V)<R)\approx F_{K}(R),
\end{aligned}
\end{equation}
where
\begin{equation}
\begin{aligned}
&F_{K}(R)=1-\int_{R}^{\infty} \mathrm{d}x_1\int_{R}^{\infty} \mathrm{d}x_2...\int_{R}^{\infty} (2\pi)^{-\frac{K}{2}}\times
\\
&
[\det(\BQ_{sc})]^{-\frac{1}{2}}
 \exp\left(-\frac{1}{2}(\bold{x}-\boldsymbol{\mu}_{sc})^{T}\BQ^{-1}_{sc} (\bold{x}-\boldsymbol{\mu}_{sc})   \right)     \mathrm{d}x_K,
\end{aligned}
\end{equation}
and the subscript $sc=AN,S$ denotes the schemes with and without AN, respectively. For the AN-aided case, $[\BQ_{AN}]_{i,j}=\bold{u}^{T}\BC_{AN,i,j}\bold{v}$, $\bold{u}=\left(1,-1,-1,1,0,0 \right)^{T}$, and $\bold{v}=\left(1,-1,0,0,-1,1 \right)^{T}$, with $i,j=1,2,...,K$. 
 For the double-scattering channel with AN, the mean and covariance can be given by
\begin{equation}
\begin{aligned}
\boldsymbol{\mu}_{AN}&=(\overline{C}_{AN,1},\overline{C}_{AN,2},...,\overline{C}_{AN,K})^{T},
\\
\BC_{AN,i,j}&=
\begin{bmatrix}
\BC_{B,B}     & \BC_{B,E_i} & \BC_{B,E_j} \\
\BC_{E_i,B}     & \BC_{E_i,E_i} & \BC_{E_i,E_j} \\
\BC_{E_j,B}     & \BC_{E_j,E_i} & \BC_{E_j,E_j}
\end{bmatrix},
\end{aligned}
\end{equation}
respectively. $\BC_{i,j}$ is given in~(\ref{BC_def}). For the double-scatter channel without AN, $[\BQ_S]_{i,j}=\bold{u}^{T}\BC_{S,i,j}\bold{v}$ with
 $\bold{u}=\left(1,-1,0 \right)^{T}$, $\bold{v}=\left(1,0,-1 \right)^{T}$. For this case, the mean and variance are given by 
\begin{equation}
\begin{aligned}
\boldsymbol{\mu}_{S}&=(\overline{C}_{S,1},\overline{C}_{S,2},...,\overline{C}_{S,K})^{T},
\\
\BC_{S,i,j}&\!=\!\!\!
\begin{bmatrix}
\!\![\BM]_{BW,BW}   \!  & \! \! [\BM]_{BW,E_i W } \!& \!\![\BM]_{BW,E_j W} \\
\![\BM]_{E_i W,BW}   \!  &\! [\BM]_{E_i W,E_i W}\! & \![\BM]_{E_i  W,E_j W} \!\\
[\BM]_{E_j W,BW}    \! & \![\BM]_{E_j W,E_i W} \!& \![\BM]_{E_j W,E_j W}\!
\end{bmatrix},
\end{aligned}
\end{equation}
where $[\BM]_{i,j}$ is given in~(\ref{doub_cor}). The results for the LBI case can be derived similarly.

\section{Optimization Based on Statistical CSI}
\label{sec_alg}
In this section, we will present two algorithms to optimize the ESR and the SOP of IRS-aided MIMO systems, respectively. In the following analysis, we will ignore the ceiling function in ESR.
\subsection{ESR Optimization by Jointly Designing the Covariance Matrices and Phase Shifts}
The ESR maximization problem can be formulated as
 \begin{equation}
  \begin{aligned}
\mathcal{P}1:~&\max_{\bold{\Theta},\BP_W,\BP_V} \overline{D}_{AN}(\BP_W,\BP_V,\bold{\Theta}),~s.t.
\\
&
\mathcal{C}_1 :\Tr\BP_W+\Tr\BP_V \le M P.
\\
&
\mathcal{C}_2 :\BP_W \succeq \bold{0},~~\BP_V \succeq \bold{0}
\\
&\mathcal{C}_3: \bold{\Theta}=\diag(e^{\jmath \theta_{1}},e^{\jmath \theta_{1}},...,e^{\jmath \theta_{L}}).
 \end{aligned}
 \end{equation}
which is non-convex due to the unimodular constraints on the phase shifts and the non-concavity of the objective function. To overcome the difficulty, an AO algorithm will be considered, which results in the following two sub-problems, i.e., $\mathcal{P}2$ and $\mathcal{P}3$.
 \begin{equation}
  \begin{aligned}
\mathcal{P}2:~&\max_{\BP_{W},\BP_{V}} \overline{D}_{AN}(\BP_{W},\BP_{V}),~s.t.~\mathcal{C}_1,~\mathcal{C}_2.
 \end{aligned}
 \end{equation}
 and
  \begin{equation}
  \begin{aligned}
\mathcal{P}3:~&\max_{\bold{\Theta}} \overline{D}_{AN}(\bold{\Theta}),~s.t.~\mathcal{C}_3.
 \end{aligned}
 \end{equation}
 Next, we will solve $\mathcal{P}2$ and $\mathcal{P}3$, respectively.
\subsubsection{Optimization of the covariance matrices for the signal and AN}
\label{cov_opt}
Note that the objective function of $\mathcal{P}2$ is not concave due to the two negative $\log$ terms in~(\ref{an_esr_exp}). To utilize the classical convex optimization algorithms, we apply the successive convex approximation (SCA) approach to handle the negative $\log$ terms, which is denoted as $N(\BP_{W},\BP_{V})$, i.e.,
  \begin{equation}
  \begin{aligned}
 & N(\BP_{W},\BP_{V})=-\overline{D}(\sigma^2_{E},\FR_{E},\FT^{\frac{+}{2}}_{E}\BP_U\FT^{\frac{-}{2}}_{E})
  \\
  &-\overline{D}(\sigma^2_{B},\FR_{B},\FT^{\frac{+}{2}}_{B}\BP_V\FT^{\frac{-}{2}}_{B}).
  \end{aligned}
 \end{equation}
Specifically, we aim to obtain a convex upper bound for the objective function by an iterative approach. To facilitate the SCA, we construct a global under-estimator for the negative terms. We use the superscript $(t)$ as the iteration index. It has been show in~\cite{dumont2010capacity} that for any given positive semi-definite matrix $\BQ$, the function $\overline{D}(\sigma^2_k,\FR_k,\FT^{\frac{+}{2}}_{k}\BQ\FT^{\frac{-}{2}}_{k})$ is strictly concave with respect to $\BQ$. Therefore, given $t$, $N(\BP_{W},\BP_{V})$ is lower bounded by its first-order Taylor expansion~\cite{yu2020robust},

\begin{equation}
\begin{aligned}
\label{taylor_sca}
& N(\BP_{W},\BP_{V})\ge N(\BP_{V}^{(t)},\BP_{W}^{(t)})
\\
&
- \Tr\nabla_{\BP_{W}}N(\BP_{W}^{(t)},\BP_{V}^{(t)})(\BP_{W}-\BP_{W}^{(t)})
\\
&
-\Tr \nabla_{\BP_{V}}N(\BP_{W}^{(t)},\BP_{V}^{(t)})(\BP_{V}-\BP_{V}^{(t)}),
\end{aligned}
\end{equation}
where 
\begin{equation}
\begin{aligned}
 &\nabla_{\BP_{W}}N(\BP_{W}^{(t)},\BP_{V}^{(t)})=\alpha_{E}^{(t)}\FT_{E}^{{\frac{-}{2}} } \bold{L}_{T,EU}\FT_{E}^{\frac{+}{2}} 
 \\
  &\nabla_{\BP_{V}}N(\BP_{W}^{(t)},\BP_{V}^{(t)})=\alpha_{E}^{(t)}\FT_E^{{\frac{-}{2}} } \bold{L}_{T,EU}\FT_E^{{\frac{+}{2}} }
  \\
  &
  +\alpha_{B}^{(t)}\FT_B^{{\frac{-}{2}} }  \bold{L}_{T,BV}\FT_B^{{\frac{+}{2}} } .
\end{aligned}
\end{equation}
By employing this lower bound for the objective function, $\mathcal{P}2$ can be reformulated as the following convex optimization problem,
\begin{equation}
\begin{aligned}
\label{SCA_opt}
&\mathcal{P}4:~\max_{\BP_{W},\BP_{V}} \overline{D}_{AN,SCA}(\BP_W,\BP_V|\BP_{W}^{(t)},\BP_{V}^{(t)})
\\
&
:= \overline{D}(\sigma^2_{B},\FR_{B}, \FT^{\frac{+}{2}}_{B}\BP_{U}\FT^{\frac{-}{2}}_{B})
\\
&
+\overline{D}(\sigma^2_{E},\FR_{E},\FT^{\frac{+}{2}}_{E}\BP_{U}\FT^{\frac{-}{2}}_{E}) 
\\
&
- \Tr\nabla_{\BP_{W}}N(\BP_{W}^{(t)},\BP_{V}^{(t)})\BP_{W}
\\
&
-\Tr \nabla_{\BP_{V}}N(\BP_{W}^{(t)},\BP_{V}^{(t)})\BP_{V}+const,
~s.t.~\mathcal{C}_1,~\mathcal{C}_2.
\end{aligned}
\end{equation}
The solution of $\mathcal{P}4$ can not be directly obtained since $\alpha$ and $\overline{\alpha}$ in $\overline{D}$ are coupled with $\BP_{W}$ and $\BP_{V}$. $f(\overline{\alpha}_{B,U})=\overline{D}(\sigma^2_{B},\FR_{B},\FT^{\frac{+}{2}}_{B}\BP_{U}\FT^{\frac{-}{2}}_{B})$ can be regarded as a function with respect to $\overline{\alpha}_{B,U}$, where $\alpha_{B,U}=\frac{1}{M}\Tr\FR_{B}\left(\sigma^2_B\bold{I}_{N}+\overline{\alpha}_{B,U}\FR_{B} \right)^{-1}$ if we fix other parameters. Furthermore, we can obtain that $f(\overline{\alpha})$ is quasi-concave with respect to $\overline{\alpha}$ because
\begin{equation}
\begin{aligned}
&\frac{\mathrm{d} f(\overline{\alpha})}{\mathrm{d} \overline{\alpha}}=
-(\Tr\FT^{\frac{+}{2}}_{B}\BP_{U}\FT^{\frac{-}{2}}_{B}(\bold{I}_{M}+{\alpha}\FT^{\frac{+}{2}}_{B}\BP_{U}\FT^{\frac{-}{2}}_{B})^{-1} 
\\
&
-M\overline{\alpha})
\frac{\Tr\FR_{B}^2(\sigma_B^2\bold{I}_{N_B}+\overline{\alpha}\FR_{B} )^{-2}}{M}
\end{aligned}
\end{equation}
is larger than $0$ when $\Tr\FT^{\frac{+}{2}}_{B}\BP_{U}\FT^{\frac{-}{2}}_{B}(\bold{I}_{M}+{\alpha}\FT^{\frac{+}{2}}_{B}\BP_{U}\FT^{\frac{-}{2}}_{B})^{-1}<M\overline{\alpha}$ and less than $0$ when $\Tr\FT^{\frac{+}{2}}_{B}\BP_{U}\FT^{\frac{-}{2}}_{B}(\bold{I}_{M}+{\alpha}\FT^{\frac{+}{2}}_{B}\BP_{U}\FT^{\frac{-}{2}}_{B})^{-1}>M\overline{\alpha}$. The solution of $\Tr\FT^{\frac{+}{2}}_{B}\BP_{U}\FT^{\frac{-}{2}}_{B}(\bold{I}_{M}+{\alpha}\FT^{\frac{+}{2}}_{B}\BP_{U}\FT^{\frac{-}{2}}_{B})^{-1}=M\overline{\alpha}$, denoted as $\overline{\alpha}^{*}$, corresponds to the minimum of $f(\overline{\alpha})$. Similar analysis can be performed on $\overline{D}(\sigma^2_{E},\FR_{E},\FT^{\frac{+}{2}}_{E}\BP_V\FT^{\frac{-}{2}}_{E})$. Therefore, by omitting unrelated variables, we rewrite $\mathcal{P}4$ as
\begin{equation}
\begin{aligned}
\mathcal{P}5:~&\max_{\BP_{W},\BP_{V}} \min_{ \gamma_1>0,\gamma_2>0} F(\gamma_1,\gamma_2,\BP_{W},\BP_{V})
:=
\\
&
 f_1(\gamma_1,\BP_W,\BP_V)+f_2(\gamma_2,\BP_V)
\\
&
- \Tr\nabla_{\BP_{W}}N(\BP_{W}^{(t)},\BP_{V}^{(t)})\BP_{W}
\\
&
-\Tr \nabla_{\BP_{V}}N(\BP_{W}^{(t)},\BP_{V}^{(t)})\BP_{V},
~s.t.~\mathcal{C}_1,~\mathcal{C}_2.
\end{aligned}
\end{equation}
Note that given any matrix $\BA$ and positive semi-definite matrix $\BQ$, $\log(\det(\bold{I}+\BA\BQ\BA^{H}))$ is strictly concave with respect to $\BQ$. By the analysis above, we can conclude that $F(\gamma_1,\gamma_2,\BP_{W},\BP_{V})$ is concave with respect to $\BP_{W}$ and $\BP_{V}$, and quasi-convex with respect to $\gamma_1$ and $\gamma_2$. According to the generalized mini-max theorem~\cite{sion1958general}, a saddle-point exists. For any function $f$, we have $\max_{\gamma_{1}}\min_{\gamma_{2}}   f(\gamma_{1},\gamma_2)\le \min_{\gamma_{2}}\max_{\gamma_{1}} f(\gamma_{1},\gamma_2)$ and the equality holds if the saddle point exists~\cite{boyd2004convex}. By the analysis in~\cite{wen2006asymptotic}, the optimization problem $\mathcal{P}5$  can be rewritten as
\begin{equation}
\begin{aligned}
 \min_{ \gamma_1>0,\gamma_2>0} & \max_{\BP_{W},\BP_{V}}F(\gamma_1,\gamma_2,\BP_{W},\BP_{V})
 \\
 &
- \Tr\nabla_{\BP_{W}}N(\BP_{W}^{(t)},\BP_{V}^{(t)})\BP_{W}
\\
&
-\Tr \nabla_{\BP_{V}}N(\BP_{W}^{(t)},\BP_{V}^{(t)})\BP_{V},
~s.t.~\mathcal{C}_1,~\mathcal{C}_2.
\end{aligned}
\end{equation}
The inner problem is the maximization of a concave function~\cite{boyd2004convex}, which can be written as
\begin{equation}
\begin{aligned}
\mathcal{P}6:~& \max_{\BP_{W},\BP_{V}}
\log\det(\bold{I}_{M}+\alpha_{BU}(\gamma_{1})\FT^{\frac{+}{2}}_{B}\BP_{U}\FT^{\frac{-}{2}}_{B})
\\
&
+\log\det(\bold{I}_{M}+\alpha_{EU}(\gamma_{2})\FT^{\frac{+}{2}}_{E}\BP_{V}\FT^{\frac{-}{2}}_{E})
\\
&
- \Tr\nabla_{\BP_{W}}N(\BP_{W}^{(t)},\BP_{V}^{(t)})\BP_{W}
\\
&
-\Tr \nabla_{\BP_{V}}N(\BP_{W}^{(t)},\BP_{V}^{(t)})\BP_{V},
~s.t.~\mathcal{C}_1,~\mathcal{C}_2,
\end{aligned}
\end{equation}
and resolved by CVX~\cite{grant2008cvx}. $\mathcal{P}5$ can be resolved by an iterative approach given in Algorithm~\ref{opt_PWPV}.
\begin{algorithm} 
\caption{Iterative Algorithm of Solving $\mathcal{P}5$} 
\label{opt_PWPV} 
\begin{algorithmic}[1] 
\REQUIRE  $\BP_W^{(0)}$, $\BP_V^{(0)}$,  $\nabla_{\BP_{W}}N(\BP_{W}^{(t)},\BP_{V}^{(t)})$, and $\nabla_{\BP_{V}}D(\BP_{W}^{(t)},\BP_{V}^{(t)})$. Set $c=0$.
\REPEAT
\STATE Compute $\gamma_{1}^{(c)}$ and $\gamma_{2}^{(c)}$ by solving the systems of equations~(\ref{sing_eq1}) based on $\BP_W^{(c)}$, $\BP_V^{(c)}$.
\STATE Obtain $\BP_W^{(c+1)}$, $\BP_V^{(c+1)}$ by solving $\mathcal{P}6$ using CVX.
\STATE $c \leftarrow  c+1$.
\UNTIL Convergence.
\ENSURE  $\BP_W^{(c)}$, $\BP_V^{(c)}$.
\end{algorithmic}
\end{algorithm}

\subsubsection{Optimization of the phase shifts}
\label{phase_opt}
In this section, we will present an algorithm to optimize the phase shifts given the transmit covariance matrices. Due to the non-convexity of the unimodular constraints and complex relations of parameters induced by the system of equations, we adopt a gradient ascent approach to determine the optimal phase shifts, which has been widely used in the design based on statistical CSI~\cite{zhang2022outage,liu2021secrecy,zhang2021large}.

We can find a  suboptimal solution by resorting to the gradient method. By similar computations in~\cite{zhang2022large}, we can obtain the derivatives of $\overline{D}$ with respect to $\theta_{i}$,
\begin{equation}
\begin{aligned}
\label{cs_der}
&\frac{\partial \overline{D}(\sigma^2_{X},\FR_{X},\RT_{S,X}\boldsymbol{\Theta}\BH_{T,0}\BP_Y\BH_{T,0}^{H}\boldsymbol{\Theta}^{H}\RT_{S,X} )}{\partial \theta_{i}}
\\
&
=\alpha_{XY}\mathrm{Tr} [\RT_{S,X}[(\BH_{T,0}\BP_Y\BH_{T,0}^{H}) \otimes \bold{F} ] \RT_{S,X}\times
\\
&
\left( \bold{I} + \alpha_{XY} \RT_{S,X}\boldsymbol{\Theta}\BH_{T,0}\BP_Y\BH_{T,0}^{H}\boldsymbol{\Theta}^{H} \RT_{S,X}\right)^{-1}],
\end{aligned}
\end{equation}
where $\alpha_{XY}$ is the solution of the system equations~(\ref{sing_eq1}) with $X=B,E$ and $Y=U,V$. In~(\ref{cs_der}), $\bold{F}\in \mathbb{C}^{L\times L}$ is given by
\begin{equation}
\left[\bold{F}\right]_{p,q}=\left\{
\begin{aligned}
& \jmath e^{\jmath (\theta_{i}-\theta_{q})} ,  p = i, \\
& -\jmath e^{\jmath (\theta_{p}-\theta_{i})} ,  q = i, \\
&0,  \text{otherwise}. \\
\end{aligned}
\right.
\end{equation}
For ease of illustration, we introduce the following notation
\begin{equation}
\begin{aligned}
K({\boldsymbol{\theta}})=\overline{D}_{AN}(\bold{\Theta}).
\end{aligned}
\end{equation}
The gradient $\frac{\partial K(\boldsymbol{\theta})}{\partial \theta_i}$ can be obtained by the linearity of derivatives. The backtrack line search method~\cite{boyd2004convex} is adopted to find the step size $\gamma$ such that
\begin{equation}
\label{grad_up}
K\left(\bm{\theta}+\gamma\nabla K\left(\bm{\theta} \right) \right) \ge  K(\bm{\theta})+
c\gamma \|\nabla K(\bm{\theta})\|,
\end{equation}
where $\nabla K(\bm{\theta})=\left(\frac{K(\bm{\theta})}{\partial \theta_{1}},\frac{K(\bm{\theta})}{\partial \theta_{2}},...,\frac{K(\bm{\theta})}{\partial \theta_{{L}}}\right)^{T}$ and $0<c<1$ is a constant. 
Next, we provide the gradient ascent method. We use the Armijo-Goldstein (AG) line search method~\cite{armijo1966minimization} to find an expected increase of the objective function based on the local gradients~\cite{kammoun2020asymptotic,liu2021secrecy,zhang2022large}. Please note that superscript $(c)$ and $(t)$ represent the indices of iterations in Algorithm~\ref{opt_PWPV} and Algorithm~\ref{AO_trans}, respectively. If we discard $\BP_{V}$, Algorithm~\ref{AO_trans} is also applicable to the joint optimization of the wiretap systems.


\subsubsection{AO algorithm}
According to the analysis in Sections~\ref{cov_opt} and~\ref{phase_opt}, the overall AO algorithm is given in Algorithm~\ref{AO_trans}. 
\begin{algorithm} 
\ref{AO_trans}
\caption{AO Algorithm of Transmit Covariance Matrices $\BP_{W}$, $\BP_{V}$ and Phase Shift Matrix $\bold{\Theta}$} 
\label{AO_trans}
\begin{algorithmic}[1] 
\REQUIRE  $\bm{\theta}^{\left(0 \right)}, \BP_{W}^{\left(0 \right)}, \BP_{V}^{\left(0 \right)}$, and set $t=0$.
\REPEAT
\STATE Compute the convex approximation by~(\ref{SCA_opt}). 
\STATE Obtain $\BP_{W}^{\left(t+1 \right)}$ and $\BP_{V}^{\left(t+1 \right)}$ with fixed $\bm{\theta}^{(t)}$ by utilizing Algorithm~\ref{opt_PWPV}.
	\STATE Compute $\nabla K\left(\bm{\theta}^{(t)} \right)$ by~(\ref{cs_der}).
	\STATE Find the step size $\gamma$ by backtrack line search~\cite{boyd2004convex}.
	\STATE $\bm{\theta}^{(t+1)}= \bm{\bold{\theta}}^{(t)}+\gamma  \nabla K\left(\bm{\theta}^{(t)} \right)$.
\STATE $t \leftarrow  t+1$
\UNTIL Convergence.
\ENSURE  $\bm{\theta}^{(t)}, \BP_{W}^{(t)},\BP_{V}^{(t)}$.
\end{algorithmic}
\end{algorithm}


\subsubsection{The convergence of Algorithm~\ref{AO_trans}}: Here we first show that the iterative design of the covariance matrices is non-decreasing. For the case the maximum of the SCA problem in $\mathcal{P}4$ is $\overline{D}_{AN,SCA}(\BP_W^{*},\BP_V^{*}|\BP_W^{(t)},\BP_V^{(t)})$, which can be solved by Algorithm~\ref{opt_PWPV}, we have
\begin{equation}
\begin{aligned}
&\overline{D}_{AN}(\BP_{W}^{*},\BP_{V}^{*}) \ge \overline{D}_{AN,SCA}(\BP_W^{*},\BP_V^{*}|\BP_W^{(t)},\BP_V^{(t)})  \ge
\\
& \overline{D}_{AN,SCA}(\BP_W^{(t)},\BP_V^{(t)} |\BP_W^{(t)},\BP_V^{(t)}) 
= \overline{D}_{AN}(\BP_{W}^{(t)},\BP_{V}^{(t)}),
\end{aligned}
\end{equation}
where the first inequality follows from the concavity on $\BP_W$ and $\BP_V$ in~(\ref{taylor_sca}). The optimization of phase shifts is obviously non-decreasing. Thus, in each subproblem, we can obtain a non-decreasing value of the objective function. As a result, the algorithm will converge to a stationary point.


\subsection{Optimization of SOP}
\label{opt_sop}

Given statistical CSI, the optimization problem for the SOP with respect to the phase shifts can be formulated as
 \begin{equation}
  \begin{aligned}
\mathcal{P}8:~&\min_{\bold{\Theta}} P_{S,out}(R),~s.t.~ \mathcal{C}_3.
 \end{aligned}
 \end{equation}
We can first approximate the SOP using Proposition~\ref{s_sop}. Thus, $\mathcal{P}8$ can be rewritten as 
 \begin{equation}
  \begin{aligned}
  \label{p2}
\mathcal{P}9:~&\min_{\bold{\Theta}}~  P(\bold{\Theta})=\Phi\left(\frac{R-\overline{C}_{B}(\bold{\Theta})+\overline{C}_{E}(\bold{\Theta})}{\sqrt{V_{S}(\bold{\Theta})}}\right),
\\
&
s.t.~\mathcal{C}_3.
 \end{aligned}
 \end{equation}
 The challenge arises from the fact that $\delta$, $\omega$, and $\overline{\omega}$ are functions of $\theta_l$ and the non-convexity of the unimodular constraint of the phase shift. To overcome these issues, we use the gradient descent method. Specifically, in each iteration, the update of $\theta_l$ is obtained by searching in the negative gradient direction, until the value of the objective function converges to a stationary point. Next, we compute the partial derivatives with respect to $\theta_{l}$, $l=1,2,...,L$, where we use the notation $(\cdot)^{(l)}=\frac{\partial (\cdot)}{\partial \theta_{l}}$ to represent the partial derivatives. By the chain rule, the partial derivative of $P(\bold{\Theta})$ with respect to $\theta_{l}$ is given by
\normalsize
\begin{equation}
\label{de1}
\begin{aligned}
P^{(l)}(\bold{\Theta})&=\frac{\exp\left(-\frac{ T^2(\bold{\Theta})}{2} \right)T^{(l)}(\bold{\Theta})}{\sqrt{2\pi}},
\end{aligned}
\end{equation}
where 
\begin{equation}
\begin{aligned}
T(\bold{\Theta})&=\frac{R-\overline{C}_{S}}{\sqrt{V_{S}(\bold{\Theta})}},~~\overline{C}_{S}=\overline{C}_{B}(\bold{\Theta})-\overline{C}_{E}(\bold{\Theta}),
\\
\\
 T^{(l)}(\bold{\Theta})&=\frac{ -\overline{C}_{S}^{(l)}(\bold{\Theta}) V_{S}(\bold{\Theta})-\frac{1}{2}(R- \overline{C}_S(\bold{\Theta}))V^{(l)}_{S}(\bold{\Theta})}{V_{S}(\bold{\Theta})^{\frac{3}{2}}},
\\
\overline{C}^{(l)}_k(\bold{\Theta})&=\overline{\omega}_k\Tr[(\frac{1}{\delta_k} \bold{I}_{L}+\overline{\omega}_k\bold{T}_{S,k}^{\frac{1}{2}}\bold{\Theta}\bold{R}_{S}\bold{\Theta}^{H}\bold{T}^{\frac{1}{2}}_{S,k})^{-1}\bold{F}_{k,l}].
\\
\end{aligned}
\end{equation}
$\bold{F}_{k,l}$ is defined as
\begin{equation}
\bold{F}_{k,l}=\bold{T}_{S,k}^{\frac{1}{2}}(\bold{G}_{l}\otimes\bold{R}_{S})\bold{T}_{S,k}^{\frac{1}{2}},\quad l=1,2,...,L,
\end{equation}
where
\begin{equation}
\left[\bold{G}_{l}\right]_{p,q}=\left\{
\begin{aligned}
& \jmath e^{\jmath (\theta_{l}-\theta_{q})} ,&p = l, \\
& -\jmath e^{\jmath (\theta_{p}-\theta_{l})} ,&q = l, \\
&0,  &otherwise.
\end{aligned}
\right.
\end{equation}
In fact, $\bold{F}_{k,l}=(\bold{T}_{S,k}^{\frac{1}{2}}\bold{\Theta}\bold{R}_{S}\bold{\Theta}^{H}\bold{T}_{S,k}^{\frac{1}{2}})^{(l)}$ and the term $V'_{l}(\bold{\Theta})$ can be given by 
\begin{equation}
\begin{aligned}
&V_{S}(\bold{\Theta})^{(l)}=W_B^{(l)}+W_E^{(l)}-2W_{B,E}^{(l)},
\\
&W_{k}^{(l)}=
\frac{\nu_{S,k}\nu_{T,k}^{(l)}+\nu_{S,k}^{(l)}\nu_{T,k}}{\Delta_{S,k}}+ \frac{\nu_{R,k}\Gamma_{k}^{(l)}+\nu_{R,k}^{(l)}\Gamma_k}{\Delta_k},
\\
&
W_{B,E}=\frac{(\nu_{S,B,E}'\nu_{T,B,E}+\nu_{T,B,E}'\nu_{S,B,E} )}{\Delta_{S,B,E}},
\end{aligned}
\end{equation}
where
\begin{equation}
\label{gamma_de}
\begin{aligned}
\Gamma_{k}^{(l)}&=\frac{M}{L\delta^2_k}[\frac{2\nu_{T,I,k}\nu_{T,I,k}^{(l)}\nu_{S,k}}{\Delta_{S,k}} 
+\frac{\nu_{T,I,k}^2(\nu_{S,k}^{(l)}+\nu_{S,k}^{2}\nu_{T,k}^{(l)})}{\Delta_{S,k}^2} 
\\
&+ 2\omega_k \overline{\omega}_{k}^{(l)}\nu_{T,k}+\omega_k^2\nu_{T,k}^{(l)} ]-
\\
&
\frac{2M\delta_k^{(l)}}{L\delta^3_k}(\frac{\nu_{S,k}\nu_{T,I,k}^2}{\Delta_{S,k}} 
+ \omega_k^2\nu_{T,k}).
\end{aligned}
\end{equation}
The derivatives of $\nu$ in~(\ref{gamma_de}) can be given by 
\begin{equation}
\nonumber
\begin{aligned}
&\nu_{R,k}^{(l)}
=\frac{ -2M\eta_{R,l,k}(\delta_k \omega_{k}^{(l)}\overline{\omega}_k + \delta_k \omega_k\overline{\omega}_{k}^{(l)} -\omega_k\overline{\omega}_k\delta_{k}^{(l)} )}{L\delta_l^2},
\\
&\nu_{T,k}^{(l)}=-2\omega_k^{(l)}\eta_{T,k},
\\
&\nu_{S,k}^{(l)}
=-2\overline{\omega}^{(l)}_k \eta_{S,k} -2\overline{\omega}_k\eta_{S,k}(\bold{F}_{k,l})
\\
&
+\frac{2\delta_k^{(l)} \eta_{S,I,k}}{\delta^2_k}+2\nu_{S,k}(\bold{F}_{k,l}),
\\
&\nu_{S,B,E}^{(l)} =-\eta_{S,B,B,E}(\bold{F}_{B,l})\overline{\omega}_{B}  + \eta_{S,I,B,B,E} \frac{\delta_{B}^{(l)}}{\delta_{B}^2} 
\\
&
-\eta_{S,B,B,E}\overline{\omega}_{B}^{(l)}-\eta_{S,B,E,E}(\bold{F}_{E,l})\overline{\omega}_{E}  
\\
&
+ \eta_{S,I,B,E,E} \frac{\delta_{E}^{(l)}}{\delta_{E}^2} 
-\eta_{S,B,E,E}\overline{\omega}_{E}^{(l)}+\nu_{S,B,E}(\bold{F}_{BE,l}),
\\
&\nu_{T,B,E}^{(l)}=-\eta_{T,B,B,E} \overline{\omega}_{B}^{(l)}-\eta_{T,B,E,E} \overline{\omega}_{E}^{(l)},
\\
&\nu_{S,k}(\bold{F}_{k,l})=\frac{1}{M}\Tr\bold{F}_{k,l}\BG_{S,k}\FS_k \BG_{S,k},
\\
&
\eta_{S,k}(\bold{F}_{k,l})=\frac{1}{M}\Tr\BG_{S,k}\bold{F}_{k,l}\BG_{S,k}\FS_k \BG_{S,k}\FS_k 
\\
&
\eta_{S,B,B,E}(\bold{F}_{B,l})=\frac{1}{M}\Tr\BG_{S,B}\bold{F}_{B,l}\BG_{S,B}\FS_{B} \BG_{S,E}\FS_E,
\\
&\eta_{S,B,E,E}(\bold{F}_{E,l})=\frac{1}{M}\Tr\BG_{S,E}\bold{F}_{E,l}\BG_{S,E}\FS_{E}\BG_{S,B}\FS_B 
\\
&\bold{F}_{BE,l}=\LR_{S,B}(\BG_l^{H}\otimes \RT_{S,B}\RT_{S,E}\BG_l)\LR_{S,E}
.
\end{aligned}
\end{equation}
$\bold{p}_{k,l}=(\delta^{(l)}_k,\omega^{(l)}_k,\overline{\omega}^{(l)}_k)$ can be computed by~\cite[Lemma 1]{zhang2022outage},
\begin{equation}
\bold{p}_{k,l}
=\bold{A}^{-1}\bold{q}_{k,l},
\end{equation}
where $\bold{A}_k$ and $\bold{q}_{k,l}$ are defined as
\begin{equation}
\label{A_def}
\begin{aligned}
\bold{A}_k&=\begin{bmatrix}
z\nu_{R,I,k}
& \frac{M\overline{\omega}_k\nu_{R,k}  }{L} 
& \frac{M\omega_k\nu_{R,k} }{L}
\\
-\frac{\nu_{S,I,k}}{{\delta^2_k}}
&
1
&      \nu_{S,k}
\\
0 & \nu_{T,k} & 1
\\
\end{bmatrix},
\\
\bold{q}_{k,l}
&=
\begin{bmatrix}
0 & \frac{1}{M}\Tr\bold{F}_{k,l}\BG_{S,k}-2\overline{\omega}_k\nu_{S,k}(\bold{F}_{k,l}) & 0
\end{bmatrix}^{T}.
\end{aligned}
\end{equation}
The procedure of the gradient descent method is similar to the phase shifts optimization in Algorithm~\ref{AO_trans}, which is omitted here.

\section{Simulation}
\label{sec_simu}
\subsection{Simulation Settings}
Consider a uniform linear array of antennas and reflecting elements at the BS and the IRS. The correlation matrices are generated according to the model for conventional linear antenna arrays~\cite{yong2005three},
\begin{equation}
\begin{aligned}
&[\BC(d_r,\eta,\delta,N)]_{m,n}
\\
&
=\int_{-180}^{180}\frac{1}{\sqrt{2\pi\delta^2}}e^{\jmath \frac{2\pi}{\lambda} d_r(m-n)\sin(\frac{\pi\phi}{180})-\frac{(\phi-\eta)^2}{2\delta^2} }\mathrm{d}\phi,
\end{aligned}
\end{equation}
where $m$ and $n$ denote the indices of antennas and $d_r$ represents the relative antenna spacing (in wavelengths). $\eta$ and $\delta^2$ represent the mean angle and the mean-square angle spreads, which are measured by degree. $N$ is the dimension of the matrix.

The path loss of the BS-IRS link and the IRS-user (Bob or Eve) link are given by
\begin{equation}
\beta_{BS-IRS}=\frac{C_1}{d_{BS-IRS}^{\alpha_1}},~\beta_{IRS-user}=\frac{C_2}{d_{IRS-user}^{\alpha_2}},
\end{equation}
respectively. $C_i,i=1,2$ represents the reference path loss at $1$ meter and $\alpha_i,i=1,2$ denotes the path loss exponents of links. $d_{BS-IRS}$ and $d_{IRS-user}$ represent the distances. The parameters are set as $\alpha_1=2.2$, $\alpha_2=3.67$, $C_1=10^{-23.05}$, $C_2=10^{-25.95}$~\cite{kammoun2020asymptotic}, and $d_{BS-IRS}=20$ m. The noise power $\sigma^2_B=\sigma_E^2=-94$ dBm.

For the LBI case, we consider a full rank BS-IRS LoS channel matrix given by~\cite{kammoun2020asymptotic}
\begin{equation}
\begin{aligned}
&[\BH_{0}(\lambda, d_{BS},d_{IRS})]_{m,n}
\\
&
=\exp( \jmath \frac{2\pi}{\lambda}d_{BS}\sin(\theta_{LoS_{1}}(n) )\theta_{LoS_{1}}(n) 
\\
&
(n-1)d_{IRS}\sin(\theta_{LoS_2}(m))\sin(\phi_{LoS_2}(m))
  ),
  \end{aligned}
\end{equation}
where $\sin(\phi_{LoS_1}(n)$ and $\sin(\phi_{LoS_2}(m)$ are sequences of uniform angles in $[0,\pi)$ and $[0,2\pi)$, respectively and $d_{BS}=d_{IRS}=\lambda$.

\subsection{Performance Evaluation}
In Fig.~\ref{fig_out_NA} and Fig.~\ref{fig_out}, the SOP of wiretap systems and AN-aided system are given. The parameters are set as $\FR_{B}=\BC(1,0,5)$, $\FR_E=\BC(1,60,5,3)$, $\FR_{S}=\FR_{E}=\BC(1,5,5,8)$, $\FT_{B}=\BC(1,5,5,8)$, $\FT=\bold{I}_6$, $d_{IRS-B}=30$ m, and $d_{IRS-E}=40$ m. Here we aim to validate the accuracy of the analytical expressions in Section~\ref{sec_esr} and~\ref{sec_sop} and ignore the design of $\BP_W$ and $\BP_V$. Without loss of generality, we set $\BP_W=0.9\bold{I}_6,~\BP_V=0.1\bold{I}_6$ for the AN-aided system and $\BP_W=\bold{I}_6$ for the wiretap system. The number of Monte-Carlo realizations is $10^7$. Fig.~\ref{fig_out_NA} and Fig.~\ref{fig_out} demonstrate the accuracy of the approximations in Propositions~\ref{s_sop} and~\ref{an_sop}.
\begin{figure}[t!]
\centering\includegraphics[width=0.4\textwidth]{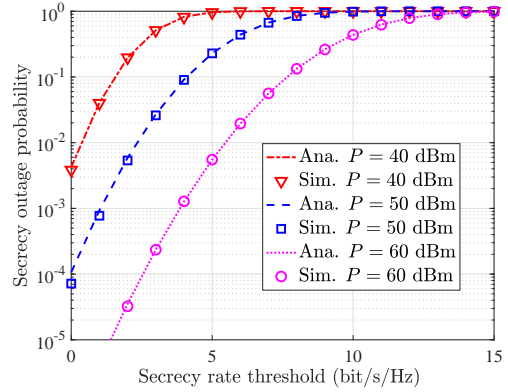}
\caption{SOP without AN.}
\label{fig_out_NA}
\end{figure}
\begin{figure}[t!]
\centering\includegraphics[width=0.4\textwidth]{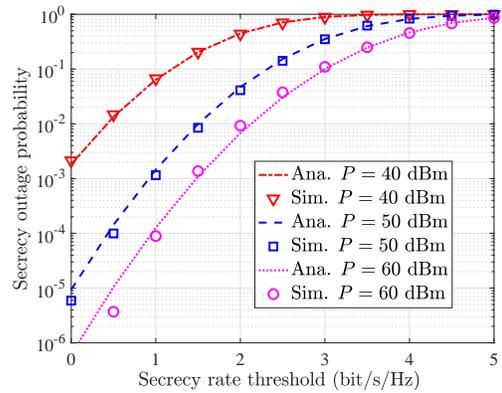}
\caption{SOP with AN.}
\label{fig_out}
\end{figure}

\begin{figure}[t!]
\centering\includegraphics[width=0.4\textwidth]{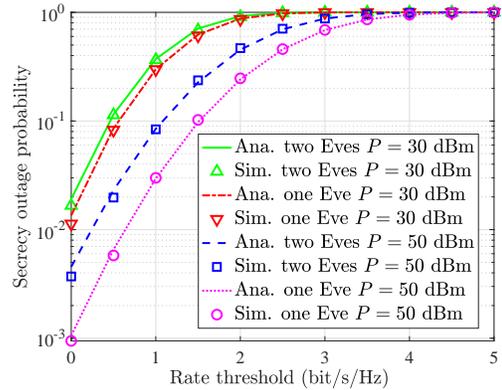}
\caption{SOP with two Eves and with AN.}
\label{mul_AN}
\end{figure}
\begin{figure}[t!]
\centering\includegraphics[width=0.4\textwidth]{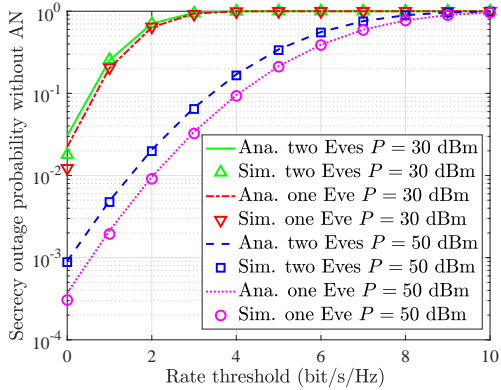}
\caption{SOP with two Eves but without AN.}
\label{mul_NA}
\end{figure}
The SOP with AN and without AN when there are two Eves are given in Fig.~\ref{mul_AN} and Fig.~\ref{mul_NA}, respectively. Here, the transmit covariance matrices are set as $\BP_{W}=0.9P\bold{I}_{M}$ and $\BP_V=0.1P\bold{I}_{M}$, where $P$ ($30$ dBm and $50$ dBm) denotes the maximum transmit power. The distances are set as $d_{IRS-B}=30$ m, $d_{IRS-E_1}=35$ m, and $d_{IRS-E_2}=35$ m. The results show that the proposed approximations are accurate for the case with two Eves. It can also be observed that more Eves result in worse SOP performance. Here, we only present the results for the double-scattering case as the LBI case has a similar phenomenon.
\subsection{Optimizations}
\begin{figure}[t!]
\centering\includegraphics[width=0.4\textwidth]{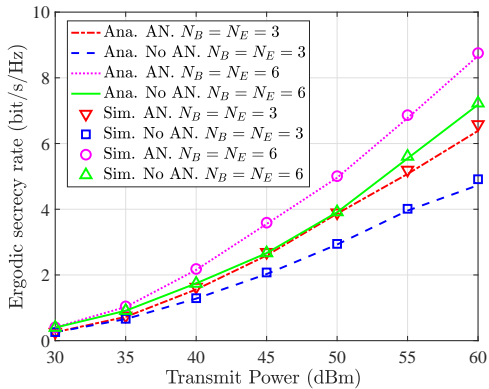}
\caption{Optimization of ESR by Algoritm~\ref{AO_trans}}
\label{esr_opt}
\end{figure}
Figure~\ref{esr_opt} depicts the ESR optimized by Algorithm~\ref{AO_trans} for the LBI case with $d_{IRS-B}=d_{IRS-E}=35$ m. It can be observed that the joint design significantly increases the ESR and the AN-aided scheme achieves better performance than that without AN. In wiretap systems, we only need to design the transmit covariance matrix and the phase shifts of the IRS while for the AN-aided system, the covariance matrix of the AN needs to be considered jointly. The AN covariance matrix provides more freedom for system design. The feasible set of the wiretap system is essentially a subset of that for the AN-aided system, so we can always find a solution for the AN-aided system that is not worse than the optimum solution of the wiretap system. Fig.~\ref{OPT_outage} illustrates the performance of the phase shifts design in~Section~\ref{opt_sop}, which indicates that the proposed scheme could decrease the SOP efficiently.

\begin{figure}[t!]
\centering\includegraphics[width=0.4\textwidth]{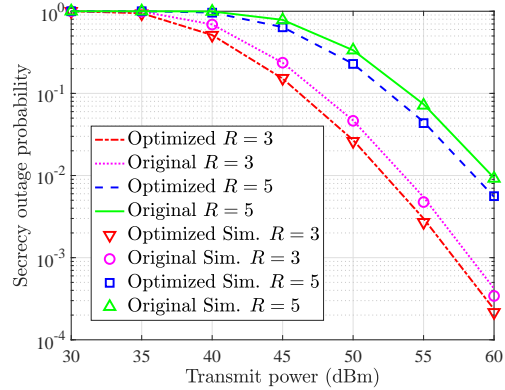}
\caption{Optimization of SOP.}
\label{OPT_outage}
\end{figure}

\section{Conclusion and Future Works}
\label{sec_con}
In this paper, by utilizing RMT, we set up new CLTs for the joint distribution of the MIs for IRS-aided MIMO secure communications, including both double-scattering and LBI channels. The Gaussianity of the joint distribution was proved by the convergence of the characteristic functions, where the closed-form expressions for the mean and covariance were also given. Based on the CLTs, the analytical expressions of the ESR and SOP for the IRS-aided MIMO communications, including both the wiretap and AN-aided systems, were derived, and the results were extended to the scenario with multiple multi-antenna Eves. Furthermore, we propose an AO algorithm to maximize the ESR of the AN-aided system by jointly optimizing the phase shifts at the IRS and the covariance matrices of the signal and the AN. Finally, a gradient algorithm was proposed to minimize the SOP for the double-scattering case. Numerical results validated the accuracy of the analytical expressions and the efficiency of the proposed optimization algorithms.

This work determined the fundamental limits of IRS-aided MIMO secure communications and provided a platform for joint optimization of the transmitter and the phase shifts. The methodology adopted in this paper is also applicable to IRS-aided systems with the direct link, which will be investigated in the future. The ESR optimization over double-scattering channel with the direct link will also be considered.

\appendices

\section{The proof of Theorem~\ref{CLT_MI}}
\label{dua_clt_proof}
The proof is motivated by the approach in~\cite{zhang2022asymptotic}, which utilizes the Gaussian tools---the integration by parts formula and Nash-Poincar{\'e} inequality~\cite{hachem2008new}. Specifically, we will investigate the characteristic function of $\bold{c}=\left(\underline{C}_{1},\underline{C}_{2},...,\underline{C}_{K}\right)^{T}$, which is given by
\begin{equation}
\label{cha_func}
\Psi(\bold{u},\bold{z})=\E e^{\jmath \bold{u}^{T}{\bold{c}} }=\E \Phi,
\end{equation}
where $\bold{u}=\left(u_{1},u_{2},...,u_{K} \right)^{T}$ and $\Phi=e^{\jmath \bold{u}^{T}{\bold{c}}}$. To show the asymptotic joint Gaussianity, we need to show that the characteristic function in~(\ref{cha_func}) converges to the characteristic function of Gaussian distribution, i.e.,
\begin{equation}
\label{cha_con}
\Psi(\bold{u},\bold{z})\xrightarrow{N\rightarrow \infty} e^{-\frac{\bold{u}^{T}\bold{M}\bold{u}}{2} },
\end{equation}
where $\BM$ represents the asymptotic covariance matrix of $\bold{c}$. Due to the difficulty in handling the logarithm of a determinant in $C_k$, we resort to handle its derivative with respect to $z_k$, i.e., the trace of the resolvent $\Tr\BQ_k=\Tr(\BH_{k}\BH^{H}_{k}+z_{k}\BI_{N})^{-1}$ and investigate convergence of the derivative for $\Psi(\bold{u},\bold{z})$ with respect to $z_k$, 
\begin{equation}
\label{ph_trace}
\frac{\partial \Psi(\bold{u},\bold{z})}{\partial z_{k}}=\jmath u_k \E \underline{ \Tr\BQ_{k}} \Phi  \xrightarrow{N\rightarrow \infty} -\frac{\bold{u}^{T}\frac{\partial \bold{M}}{\partial z_{i}}\bold{u}}{2} \Psi(\bold{u},\bold{z}).
\end{equation}

\subsection{The Evaluation of $\E \underline{ \Tr\BQ_{k}} \Phi$}
According to the resolvent identity $\bold{I}_{N_k}=z_k \BQ_k +\BQ_k \BH_k\BH_k^{H}$, we can evaluate $\jmath u_k \E \underline{ \Tr\BQ_{k}\BH_{k}\BH_{k}^{H}} \Phi$ and then further prove~(\ref{ph_trace}).
We first denote the set of indices $I_k=\{l|\BX_k=\BX_l,l=1,2,...,K\}$ to represent the $\BH$s which share the same $\BX$ with $\BH_k$, and denote $\BZ_{k}=\LR_k \BX_k \MSl_k$. We write the entries of the matrix $\BQ_{k}\BH_{k}\BH_{k}^{H}\Phi$ as the following product and use the integration by parts formula,
\begin{equation}
\label{QHHP0}
\begin{aligned}
&\E [\RTk\BY^{H}\BZ^{H}_{k} \BQ_{k}]_{j,i} [\BZ_{k}]_{n,q}  [\BY\RTk]_{p,r} \Phi
\\
&
=\sum_{p}\E [\RT_k \BY^{H}]_{j,p} [\BZ^{H}_k \BQ_k]_{p,i} [\BZ_{k}]_{n,q}  [\BY\RT_k]_{p,r} \Phi
\\
&
= \E \{\frac{1}{M} [\FT_k]_{j,r} [\BZ^{H}_k \BQ_k]_{p,i}[\BZ_k]_{i,q} \Phi
\\
&
-\frac{\Tr\BZ_{k}\BZ^{H}_{k}\BQ_{k} }{M}   [\FT_k \BH^{H}_k\BQ_k]_{j,i}[\BZ_k]_{n,q}  [\BY\RT_k]_{p,r}\Phi+
\\
&
\sum_{l}\frac{\jmath u_{l}}{M}[\RTk\RTl\BH^{H}_{l}\BQ_{l}\BZ_{l}\BZ^{H}_{k} \BQ_{k}]_{j,i} [\BZ_{k}]_{n,q}  [\BY\RTk]_{p,r}  \Phi \}.
\end{aligned}
\end{equation}
From~\cite[Theorem 1]{zhang2022asymptotic}, we have $\frac{\E\Tr\BZ_{k}\BZ^{H}_{k}\BQ_{k} }{M}=\omega_k+\BO(M^{-2})$.
By adding $ \omega_k\E [\FT_k\BH^k\BQ_k]_{j,i}[\BZ_k]_{i,q}  [\BY\RT_k]_{q,j}\Phi$ at both sides of~(\ref{QHHP0}), we can solve $\E[\RTk\BY^{H}\BZ^{H}_{k} \BQ_{k}]_{j,i} [\BZ_{k}]_{n,q}  [\BY\RTk]_{p,r} \Phi$. Summing over $j$, we can obtain
\begin{equation}
\begin{aligned}
\label{QHHP1}
&\E [\BY \FT_{k}\BY^{H}\BZ^{H}_{k} \BQ_{k}]_{p,i} [\BZ_{k}]_{n,q} \Phi=\overline{\omega}_{k}\E [\BZ^{H}_{k} \BQ_{k}]_{p,i}[\BZ_{k}]_{n,q}\Phi
\\
&-\frac{1}{M} \cov( {\Tr\BZ_{k}\BZ^{H}_{k}\BQ_{k}},[\BY\RT_k \BG_{T,k} \FT_k \BH_k^{H}\BQ_k ]_{p,i}[\BZ_k]_{n,q}\Phi
\\
&+\sum_{l}\frac{\jmath u_{l}}{M}\E [\BY\BG_{T,k}\FT_{k}\RT_{l}\BH^{H}_{l}\BQ_{l}\BZ_{l}\BZ^{H}_{k} \BQ_{k}]_{q,i} [\BZ_{k}]_{n,q}  \Phi
\\
&
+\varepsilon_{p,i,n,q},
\end{aligned}
\end{equation}
where $\varepsilon_{p,i,n,q}$ can be shown to be a $\BO(\frac{1}{N})$ term by the analysis in~\cite{zhang2022asymptotic}. By the integration by parts formula, we perform the same operations over $\E [\BZ^{H}_{k} \BQ_{k}]_{p,i}[\BZ_{k}]_{n,q}\Phi$ with respect to $[\BX_k]_{m,q}^{*}$ to obtain
\begin{equation}
\begin{aligned}
\label{QZZP}
&
\E [\BZ^{H}_{k} \BQ_{k}]_{p,i}[\BZ_{k}]_{n,q} \Phi
\\
&
=\E \sum_{m}\frac{1}{L}[ \MSr_{k}\BX^{H}\LR_{k}]_{p,m} [\BQ_{k}]_{m,i}[\BZ_{k}]_{n,q} \Phi
\\
&
 =\E \{\frac{[\MSr_{k}\MSl_{k}]_{p,q}}{L}[\FR_{k}\BQ_{k}]_{n,i} \Phi
\\
&
-\frac{  \Tr\FR_{k}\BQ_{k}}{L} [\MSr_{k}\MSl_{k}\BY\RT_{k}\BH^{H}_{k}\BQ_{k}]_{p,i}[\BZ_{k}]_{n,q} \Phi+
\\
&
\sum_{l \in I_k }\frac{\jmath u_l}{L} [\BZ_{k}]_{n,q} [\MSr_{k} \MSl_{l} \BY\RT_{l} \BH^{H}_{l} \BQ_{l}\LRl \LRk \BQ_{k} ]_{p,i}  \Phi\}.
\end{aligned}
\end{equation}
By plugging~(\ref{QZZP}) into~(\ref{QHHP1}) to replace $\E [\BZ^{H}_k \BQ_k]_{p,i}[\BZ_k]_{n,q} \Phi$ and solving $\E [\BY \FT_{k}\BY^{H}\BZ^{H}_{k} \BQ_{k}]_{p,i} [\BZ_{k}]_{n,q} \Phi$, we can obtain
\begin{equation}
\begin{aligned}
\label{HHQP}
&\E [\BY \FT_{k}\BY^{H}\BZ^{H}_{k} \BQ_{k}]_{p,i} [\BZ_{k}]_{n,q} \Phi=
\\
&
\E\{ \frac{\overline{\omega}_{k}}{L} [\BF_{S,k}\FS_k]_{p,q} [\FR_k\BQ_k]_{n,i}\Phi 
+
 \sum_{l\in I_k}\frac{\jmath u_{l} {\overline{\omega}_{k}}}{L} [\BZ_{k}]_{n,q}\times
\\
&
 [\BF_{S,k}\MSr_{k} \MSl_{l} \BY\RT_{l}  \BH^{H}_{l} \BQ_{l}\LRl \LRk \BQ_{k} ]_{q,i}  \Phi +\sum_{l}\frac{\jmath u_{l}}{M}\times
\\
 & [\BF_{S,k}\BY\BG_{T,k}\FT_{k}\RT_{l}\BH^{H}_{l}\BQ_{l}\BZ_{l}\BZ^{H}_{k} \BQ_{k}]_{q,i} [\BZ_{k}]_{n,q} \Phi \}-
\\
&
\frac{\overline{\omega}_{k}}{L} \cov( {\Tr\FR_{k}\BQ_{k}},[\BF_{S,k}\FS_{k}\BY\RT_{k}\BH^{H}_{k}\BQ_{k}]_{q,i}[\BZ_{k}]_{n,q}\Phi )
\\
&
-\frac{1}{M} \cov( [\BF_{S,k}\BY\FT^{\frac{3}{2}}_{k}\BG_{T,k}\BY\BZ^{H}_{k}\BQ_{k} ]_{q,i}
[\BZ_k]_{n,q}\Phi,
\\
&
{\Tr\BZ_{k}\BZ^{H}_{k}\BQ_{k}})
+\varepsilon.
\end{aligned}
\end{equation}
Taking $q=p$ and summing over $q$, we can solve $\E [\BZ_{k}\BY \FT_{k}\BY^{H}\BZ^{H}_{k} \BQ_{k}]_{n,i} \Phi$. Then the following equation can be obtained by the resolvent identity $\bold{I}_{N_k}=z_k \BQ_k +\BQ_k \BH_k\BH_k^{H}$, 
\begin{equation}
\label{QPII}
\begin{aligned}
&\E [\BQ_{k}]_{n,i}\Phi=[\BG_{R,k}]_{n,i}\Phi+\varepsilon_{n,i}-\sum_{l\in I_k}\frac{\jmath u_{l} \overline{\omega}_{k}}{L}\times
\\
&
\E[\BG_{R,k}\BZ_{k}\BF_{S,k}\MSr_{k} \MSl_{l} \BY\RT_l \BH^{H}_{l} \BQ_{l}\LRl \LRk \BQ_{k}]_{n,i}\Phi-
\\
&
\sum_{l}\frac{\jmath u_l}{M}\E[\BG_{R,k}\BZ_{k}\BF_{S,k}\BY\BG_{T,k}\FT_{k}\RT_{l}\BH^{H}_{l}\BQ_{l}\BZ_{l}\BZ^{H}_{k} \BQ_{k}]_{n,i}  \Phi
\\
&
+\frac{{\overline{\omega}_{k}}}{L}\cov({\Tr\FR_{k}\BQ_{k}}\Phi, [\BG_{R,k} \BZ_{k}\BF_{S,k}\FS_{k}\BY\RT_{k}\BH_{k}\BQ_{k}]_{n,i})+
\\
&\frac{1}{M}  \cov( {\Tr\BZ_{k}\BZ^{H}_{k}\BQ_{k}},[\BG_{R,k}\BZ_{k}\BF_{S,k}\BY\FT_{k}\BG_{T,k}\BH_{k}^{H}\BQ_{k} ]_{n,i}\Phi).
\end{aligned}
\end{equation}
\begin{figure*}[!htbp]
\centering
\begin{lemma} 
\label{qua_eva}
Given that $\BA$,$\BB$,$\BC$ are deterministic matrices with bounded spectral norm, the following evaluations hold true
\begin{equation}
\begin{aligned}
&
\chi_{l,k}(\BA,\BB)=\frac{1}{M}\E\Tr\BZ^{H}_{l}\BQ_{l}\BZ_{l}\BZ^{H}_{k} \BQ_{k}\bold{A}\bold{X}_{k}\bold{B},
~~\Gamma_{l,k}(\BA,\BB,\BC)=\frac{1}{M}\E\Tr\BA\BX_{k}\BB\BY\BC\BH^{H}_{l}\BQ_l \BZ_{l}\BZ^{H}_{k}\BQ_k,
\\
&
\Upsilon_{l,k}(\BA,\BB,\BC)=\frac{1}{M}\E\Tr\BA\BX_{k}\BB\BY\BC\BH^{H}_{l}\BQ_l\LR_l  \LR_k\BQ_k,
~~
\zeta_{l,k}(\BA,\BB)=\frac{1}{M}\E\Tr\BA\BX_{k}\BB\BZ^{H}_{l}\BQ_{l}\LR_l \LR_k \BQ_k,
\\
&\kappa_{k}(\BA,\BB,\BC)=\frac{1}{M}\E\Tr\BA\BX_k\BB\BY\BC\BH^{H}_k\BQ_k,
\end{aligned}
\end{equation}
\begin{equation}
\kappa_{k}(\BA,\BB,\BC)=\frac{\Tr\BF_{S,k}\MSr_k\BB}{M}\frac{\Tr\BG_{T,k}\BC\RT_k}{M}\frac{\Tr \LR_k\BA\BG_{R,k} }{L}+\BO(\frac{1}{M^2}) ,
\end{equation}
\begin{equation}
\begin{aligned}
\chi_{l,k}(\BA,\BB)\!=\!\begin{cases}
&
\frac{\delta_{l} \Tr\BA\LR_{k}\BG_{R,k}}{L}\frac{\Tr\BF_{S,k}\MSr_{k} \BB \FS_{l}\BF_{S,l}}{M\Delta_{S,l,k}\Delta_{l,k}}(1-\frac{M\overline{\omega}_{l}\overline{\omega}_{k}\nu_{R,l,k}\nu_{S,l,k}}{L\delta_{l}\delta_{k}})
+\frac{M\nu_{S,I,l,k}\Tr\BG_{R,l}\BA\LR_{l}\BG_{R,k}\LR_{k}\LR_{l}}{L^2\delta_{l}\delta_{k}\Delta_{S,l,k}\Delta_{l,k}}\frac{\Tr\BF_{S,l}\MSr_{l}\BB}{M}
\\
&
-\frac{ \Tr\BA\LR_{k}\BG_{R,k}}{L}\frac{\Tr\BF_{S,k}\FSr_{k} \BB \MSr_{l}\BF_{S,l}}{M}\frac{M \overline{\omega}_{k}\nu_{R,l,k}\nu_{S,I,l,k} }{L\delta_{l}\delta_{k}\Delta_{S,l,k}\Delta_{l,k}}
+\BO(\frac{1}{M^2})~~l \in I_{k},
\\
&\frac{\delta_{l}\Tr\LR_{k}\BA\BG_{R,k}}{L\Delta_{S,l,k}}\frac{\Tr\FS_l \BF_{S,l}\MSr_k\BB\BF_{S,k}}{M}+\BO(\frac{1}{M^2}),~~l \notin I_{k}.
\end{cases}
\end{aligned}
\end{equation}
\begin{equation}
\label{ZEBAM}
\begin{aligned}
\zeta_{l,k}(\BA,\BB)=
&\frac{\Tr\MSr_{l}\BB\BF_{S,l}}{ M}\frac{\Tr\BA\LR_l \BG_{R,l}\LR_l \LR_k \BG_{R,k} }{L\Delta_{l,k}}
+ \frac{\delta_{l}\nu_{S,I,l,k}\nu_{R,l,k}\nu_{T,l,k}\Tr\BA\LR_{k}\BG_{R,k}}{L\delta_l \delta_k \Delta_{S,l,k}\Delta_{l,k}}\frac{\Tr\BF_{S,k}\MSr_{k}\BB\FS_{l} \BF_{S,l}}{ M} , l \in I_k,
\\
\Gamma_{l,k}(\BA,\BB,\BC)&=\frac{\Tr\BG_{T,l}\RT_l \BC}{M}\chi_{l,k}(\BA,\BB)-\frac{\Tr\BG_{R,k}\LR_k \BA}{L}\frac{\Tr\MSr_k\BB \BF_{S,k}}{M}\frac{\Tr \BC\BG_{T,l}\RT_l \FT_k \BG_{T,k}}{M}\chi(\LR_k,\MSl_k)+\BO(\frac{1}{M^2}),
\\
\Upsilon_{l,k}(\BA,\BB,\BC)&=\frac{\Tr\BC\RT_{l}\BG_{T,l}}{M}\zeta_{l,k}(\BA,\BB)-\frac{\Tr\LR_{k}\BA\BG_{R,k}}{L}\frac{\Tr\MSr_k\BB \BF_{S,k}}{M}\frac{\Tr \BC\BG_{T,l}\RT_l \FT_k \BG_{T,k}}{M}\zeta(\LR_k,\MSl_k)+\BO(\frac{1}{M^2}).
\end{aligned}
\end{equation}
\end{lemma}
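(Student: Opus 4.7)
The five evaluations in Lemma~\ref{qua_eva} are deterministic equivalents for the random quadratic forms that appear as correction terms in~(\ref{QHHP0})--(\ref{QPII}), and the strategy is to derive them by systematic use of the Gaussian integration-by-parts (IBP) formula combined with the Nash--Poincar\'e inequality. Specifically, I would apply IBP to the Gaussian entries of $\BX_k$ and $\BY$, use the already-established scalar deterministic equivalents $\frac{1}{L}\E\Tr\FR_k\BQ_k=\delta_k+\BO(M^{-2})$, $\frac{1}{M}\E\Tr\BZ_k\BZ_k^H\BQ_k=\omega_k+\BO(M^{-2})$, and $\frac{1}{M}\E\Tr\FT_k\BG_{T,k}=\overline{\omega}_k+\BO(M^{-2})$ from~\cite{zhang2022asymptotic}, and bound every covariance-type remainder by $\BO(M^{-2})$ via Nash--Poincar\'e so that each random trace can be replaced by its deterministic equivalent up to a controlled error.

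For $\kappa_k$, which is the simplest since it contains only one copy of $\BQ_k$ and no secondary channel, I would first apply IBP with respect to $\BY$ inside the product $\BY\BC\BH_k^H\BQ_k$. The Stein-type derivative produces $\frac{\Tr\BG_{T,k}\BC\RT_k}{M}$ multiplied by a reduced expectation, which a second IBP with respect to $\BX_k$ collapses to $\frac{\Tr\LR_k\BA\BG_{R,k}}{L}\cdot\frac{\Tr\BF_{S,k}\MSr_k\BB}{M}$; the stated product of three traces then follows immediately.

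For $\chi_{l,k}$ and $\zeta_{l,k}$, I would apply IBP to the Gaussian entries of $\BX_k$ sitting in $\BA\BX_k\BB$. When $l\notin I_k$ the matrix $\BX_l$ hidden inside $\BZ_l$ is independent of $\BX_k$, so only the explicit $\BZ_k^H\BQ_k$ contributes and the short formulas follow after substituting $\frac{1}{L}\E\Tr\FR_k\BQ_k\simeq\delta_k$ and computing one scalar fixed point for the quadratic form in $\BY$, which produces the $\Delta_{S,l,k}^{-1}$ factor. When $l\in I_k$, however, the differentiation brings down self-referential cross terms involving $\BZ_l$ that are themselves of $\chi$- and $\zeta$-type, yielding a coupled linear identity whose unknowns are the target quantities and whose coefficient matrix has entries built from $\nu_{R,l,k}$, $\nu_{S,l,k}$, $\nu_{S,I,l,k}$, $\nu_{T,l,k}$; the key algebraic observation is that the determinant of this matrix factors cleanly as $\Delta_{S,l,k}\Delta_{l,k}$, so Cram\'er's rule yields the stated expressions, and the same system, evaluated on a different right-hand side, produces $\zeta_{l,k}$.

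For $\Gamma_{l,k}$ and $\Upsilon_{l,k}$, I would apply IBP with respect to $\BY$ on the external factor $\BA\BX_k\BB\BY\BC$: the derivative of the $\BC\BH_l^H\BQ_l$ part contributes the leading term $\frac{\Tr\BG_{T,l}\RT_l\BC}{M}\chi_{l,k}(\BA,\BB)$, respectively $\frac{\Tr\BG_{T,l}\RT_l\BC}{M}\zeta_{l,k}(\BA,\BB)$, while the derivative hitting the $\RT_k$ factor inside the right-most $\BH_k$ generates the explicit $\FT_k\BG_{T,k}$-trace correction with $\chi_{l,k}(\LR_k,\MSl_k)$, respectively $\zeta_{l,k}(\LR_k,\MSl_k)$, giving exactly the two stated terms. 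The main obstacle is the $l\in I_k$ case in the previous step: identifying every self-referential term generated by IBP on a shared $\BX_k$, organizing them into the correct small linear system, and verifying by direct computation that its determinant factors as $\Delta_{S,l,k}\Delta_{l,k}$; a secondary but non-trivial check is to confirm, using Nash--Poincar\'e together with \textbf{A.1}--\textbf{A.3}, that every covariance remainder that I discard is genuinely $\BO(M^{-2})$ uniformly in $M$.
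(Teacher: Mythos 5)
Your proposal follows essentially the same route as the paper, which omits the detailed computation and refers to the Gaussian-tools approach of \cite[Appendix F]{zhang2022asymptotic}: integration by parts on the entries of $\BX_k$ and $\BY$, Nash--Poincar\'e control of the covariance remainders, reduction of $\Gamma_{l,k}$ and $\Upsilon_{l,k}$ to $\chi_{l,k}$ and $\zeta_{l,k}$, and resolution of the resulting self-consistent linear relations, whose elimination produces exactly the $\Delta_{S,l,k}$ and $\Delta_{l,k}$ denominators (in the paper's organization the $\Delta_{l,k}$ factor emerges from a separate fixed-point evaluation of the mixed-resolvent trace $\frac{1}{L}\E\Tr\BQ_l\LR_l\LR_k\BQ_k\BM$ via the resolvent identity, which must be carried as an additional unknown alongside $\chi_{l,k}$ and $\zeta_{l,k}$). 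This is only an organizational difference from your single Cram\'er-rule system, so the proposal is correct and matches the intended proof.
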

\vspace{-0.5cm}

\hrulefill
\end{figure*}
We introduce some important quantities and give their evaluations in Lemma~\ref{qua_eva}, whose proof can be obtained by the approach in~\cite[Appendix F]{zhang2022asymptotic},  and is omitted here due to space limitation. If we take the trace operation on both sides of~(\ref{QPII}), the RHS can be represented by $\E\Tr\FR_k \BQ_k\Phi$, $\E\Tr\BZ_k\BZ_k^H\BQ_k\Phi$, and the quantities defined in Lemma~\ref{qua_eva}. By multiplying $\MSr_k\MSl_l$ with~(\ref{HHQP}) to replace the second term in~(\ref{QZZP}), $\E\underline{\Tr\BZ_k\BZ^{H}_k\BQ_k}\Phi$ can be further written as~(\ref{QZZP_}) in next page.

By far, $\E \Tr\BQ_k\Phi$ has been represented as a linear combination of $\kappa$, $\Upsilon$, $\Gamma$, and $\E \underline{\Tr\FR_k\BQ_k}\Phi$.
Now, we only need to evaluate $\E\underline{\Tr\FR_{k}\BQ_k}\Phi$. By multiplying $\FR_k$ on both sides of~(\ref{QPII}), taking the trace operation, and replacing $\E\underline{\Tr\BZ_k\BZ^{H}_k \BQ_k} \Phi$ by $\E\underline{\Tr\FR_k \BQ_k} \Phi$, $\Upsilon$, $\Gamma$, we can solve $\E\underline{\Tr\FR_k \BQ_k} \Phi$, which is given in~(\ref{TRQP}) in next page. According to~(\ref{QPII}), $\E \Tr\BQ_k\Phi$ can be evaluated as~(\ref{TRQPHI}) in next page, where $K_{I,k}$ can be evaluated as
\begin{equation}
\begin{aligned}
K_{I,k}&=\frac{M{\overline{\omega}_k}}{L}\kappa(\LR_k\BG_{R,k},\FS_k^{\frac{3}{2}}\BF_{S,k},\RT_k)
\\
&
+ \frac{M \nu_{S,I,k} }{L\delta_k^2\Delta_{S,k}}\kappa(\LR\BG_{R,k},\FS_k^{\frac{1}{2}}\BF_{S,k},\FT^{\frac{3}{2}}_k\BG_{T,k})
\\
&
=\frac{(1-\Delta_{k}) \nu_{R,I,k}}{\nu_{R,k}} +\BO(\frac{1}{M^2}).
\end{aligned}
\end{equation}

Therefore, the problem resorts to the evaluation of $W_l$ defined in~(\ref{TRQPHI}), which can be divided into two cases:
\subsubsection{$l\in I_k$}
In this case, $W_l$ can be further computed by
\begin{equation}
\begin{aligned}
W_{l}=  V_{l,1,1}+V_{l,2,2}+\frac{K_{I,k}}{\Delta_k }( X_{l,1,1}+X_{l,2,2})
\\
+\frac{V_{l,1}+V_{l,2}+V_{l,3}+V_{l,4}-V_{l,1,1}-V_{l,2,2}}{\Delta_{k}}.
\end{aligned}
\end{equation}
The evaluations of $V_{i,j,k}$ can be done by Lemma~\ref{qua_eva}, which are given in equations (\ref{exp_v1})-(\ref{exp_v4}) at the top of the next page. Here $\eta_{S,k,k,l,a+0.5,0.5}=\frac{1}{M}\Tr\FS_k^{a}\BG_{S,k}^2\MSr_{k}\MSl_{l}\BG_l $. The results for $X_{i,j,k}$ can be computed similarly by replacing $\nu_{R,I,k}$ and $\eta_{R,I,k}$ by $\nu_{R,k,k,l}$ and $\eta_{R,k,k,l}$, respectively. The computation of the derivatives of $\delta$, $\omega$, $\overline{\omega}$ can be found in~\cite[Eq.(82)-(84)]{zhang2022asymptotic}.

\begin{figure*}
\begin{equation}
\label{QZZP_}
\begin{aligned}
&
\E\underline{\Tr\BZ_k \BZ^{H}_k \BQ_k} \Phi=
\frac{1}{\Delta_{S,k}}[(\frac{M \omega_k}{L\delta_k}-\frac{M\overline{\omega}_k \nu_{S,k} }{L\delta_{k}})
\E\underline{\Tr\FR_{k}\BQ_{k}}  \Phi
+\sum_{l \in I_k}\frac{\jmath u_l M}{L} \Upsilon_{l,k}(\LR_k,\BF_{S,k} \MSr_{k}\MSl_{l},\RT_{l} ) \E  \Phi
\\
&
-\sum_{l} \jmath u_l \delta_k  \Gamma(\LR_k, \MS_k \BF_{S,k}\MSr_{k} \MSl_{l},\FT_{k}\BG_{T,k}\RT_l )\E \Phi ]
+\BO(\frac{1}{M}).
\end{aligned}
\end{equation}
\begin{equation}
\label{TRQP}
\begin{aligned}
&
\E \underline{\Tr\FR_k \BQ_k }\Phi =\frac{1}{\Delta_{k}}[\frac{\kappa_k(\FR_k^{\frac{3}{2}}\BG_{R,k},\FS^{\frac{+}{2}}_k\BF_{S,k},\FT^{\frac{3}{2}}_k \BG_{T,k})}{\Delta_{S,k}}\sum_{l \in I_k}\frac{\jmath u_l M}{L} \Upsilon_{l,k}(\LR_k,\BF_{S,k} \MSr_{k}\MSl_{l},\RT_{l} ) 
\\
&
-\frac{\kappa_k(\FR_k^{\frac{3}{2}}\BG_{R,k},\FS^{\frac{+}{2}}_k\BF_{S,k},\FT^{\frac{3}{2}}_k \BG_{T,k})}{\Delta_{S,k}}\sum_{l} \jmath u_l \delta_k  \Gamma(\LR_k, \MS_k \BF_{S,k}\MSr_{k} \MSl_{l},\FT_{k}\BG_{T,k}\RT_l )
\\
&
-\sum_{l\in I_k }\frac{\jmath u_l {\overline{\omega}_k } M }{L}\Upsilon_{l,k}(\BG_{R,k}\FR_k^{\frac{3}{2}} ,\FS_{k}^{\frac{+}{2}}\BF_{S,k}\MSr_{k} \MSl_{l} ,\RT_l )
-\jmath u_l \Gamma_{l,k}(\BG_{R,k}\FR_k^{\frac{3}{2}},\FS_k^{\frac{+}{2}}\BF_{S,k},\BG_{T,k}\FT_{k}\RT_{l})]\E \Phi+\BO(\frac{1}{M})
\\
&=\sum_{l}\frac{\jmath u_l}{\Delta_{k}} [X_{l,1}\mathbbm{1}_{\{l\in I_k \}}+X_{l,2}+X_{l,3}+X_{l,4}\mathbbm{1}_{\{l\in I_k \}}]+\BO(M^{-1})=\sum_{l}\frac{\jmath u_l}{\Delta_{k}} X_{l}+\BO(\frac{1}{M}),
\end{aligned}
\end{equation}

\begin{equation}
\label{TRQPHI}
\begin{aligned}
&
\E \underline{\Tr \BQ_k }\Phi = K_{I,k} \E \underline{\Tr\FR_k \BQ_k }\Phi +   [\frac{\kappa_k(\FR_k^{\frac{3}{2}}\BG_{R,k},\FS^{\frac{+}{2}}_k\BF_{S,k},\FT^{\frac{3}{2}}_k \BG_{T,k})}{\Delta_{S,k}}\sum_{l \in I_k}\frac{\jmath u_l M}{L} \Upsilon_{l,k}(\LR_k,\BF_{S,k} \MSr_{k}\MSl_{l},\RT_{l} ) 
\\
&
-\frac{\kappa_k(\FR_k^{\frac{3}{2}}\BG_{R,k},\FS^{\frac{+}{2}}_k\BF_{S,k},\FT^{\frac{3}{2}}_k \BG_{T,k})}{\Delta_{S,k}}\sum_{l} \jmath u_l \delta_k  \Gamma(\LR_k, \MSl_k \BF_{S,k}\MSr_{k} \MSl_{l},\FT_{k}\BG_{T,k}\RT_l )
\\
&
-\sum_{l\in I_k }\frac{\jmath u_l {\overline{\omega}_k } M }{L}\Upsilon_{l,k}(\BG_{R,k}\FR_k^{\frac{3}{2}} ,\FS_{k}^{\frac{+}{2}}\BF_{S,k}\MSr_{k} \MSl_{l} ,\RT_l )
-\jmath u_l \Gamma_{l,k}(\BG_{R,k}\FR_k^{\frac{3}{2}},\FS_k^{\frac{+}{2}}\BF_{S,k},\BG_{T,k}\FT_{k}\RT_{l})]\E \Phi+\BO(\frac{1}{M})
\\
&=\sum_{l} K_{I,k} \E \underline{\Tr\FR_k \BQ_k }\Phi + \jmath u_l [V_{l,1}\mathbbm{1}_{\{l\in I_k \}}+V_{l,2}+V_{l,3}+V_{l,4}\mathbbm{1}_{\{l\in I_k \}}]+\BO(M^{-1})
:=\sum_{l}\jmath u_l  W_{l} +\BO(\frac{1}{M}).
\end{aligned}
\end{equation}
\hrulefill
\end{figure*}

\begin{figure*}[t!]
\begin{equation}
\label{exp_v1}
\begin{aligned}
&V_{l,1}=-\frac{M\overline{\omega}_{k}\overline{\omega}_{l}\eta_{R,I,k,k,l}\nu_{S,k,l} }{L\delta_{k}\delta_{l}\Delta_{k,l}}  
+  \frac{M\overline{\omega}_{k}\nu_{R,k,l}\nu_{R,I,k}\eta_{S,k,k,l,1.5,0.5}\nu_{T,k,l}}{L\delta_{l}^2\delta_{k}^3\Delta_{S,k,l}\Delta_{k,l}}
+
\frac{M\nu_{R,k,l}\nu_{R,I,k}\nu_{S,k,l}\overline{\omega}_{k}\overline{\omega}_{l}}{L\delta_{k}^2\delta_{l}\Delta_{k,l}}
\\
&
-\frac{M\nu_{R,k,l}\nu_{R,I,k}\eta_{S,k,k,l,1,1}\overline{\omega}_{k}\overline{\omega}_{l}}{L\delta_{k}^3\delta_{l}\Delta_{k,l}}
+\BO(\frac{1}{M^2})
=V_{l,1,1}+V_{l,1,2}+V_{l,1,3}+V_{l,1,4}+\BO(\frac{1}{M^2}).
\end{aligned}
\end{equation}
\begin{equation}
\label{exp_v2}
\begin{aligned}
 V_{l,2}& =-\frac{\nu_{T,k,l}\eta_{S,k,k,l,1,1}\nu_{R,I,k}}{\delta_{k}^2\Delta_{S,k,l}\Delta_{k,l}}(
1-\frac{M\overline{\omega}_{1}\overline{\omega}_{2}\nu_{R,k,l}\nu_{S,l,k} }{L\delta_{l}\delta_{k}}
)
-\frac{M\nu_{T,k,l}\nu_{S,I,k,l}^2\eta_{R,I,k,k,l}}{L\delta_{k}^2\delta_{l}^2\Delta_{S,k,l}\Delta_{k,l}}
+
\frac{M \nu_{R,k,l}\nu_{S,I,k,l}\nu_{S,I,k,l}\nu_{R,I,k}\nu_{T,k,l} }{L\delta_{l}^2\delta_{k}^3\Delta_{S,k,l}\Delta_{k,l}}
\\
&
-\frac{M \nu_{R,k,l}\nu_{S,I,k,l}\eta_{S,k,k,l,0.5,0.5}\nu_{R,I,k}\nu_{T,k,l} }{L\delta_{l}^2\delta_{k}^4\Delta_{S,k,l}\Delta_{k,l}}
  +\frac{1}{\Delta_{S,k,l}\delta^2_{k} }(\frac{\nu_{S,k,l}(1-\frac{M\overline{\omega}_{k}\overline{\omega}_{l}\nu_{R,k,l}}{L\delta_{k}\delta_{l}}) }{\Delta_{k,l}}
+\frac{M\nu_{S,I,k,l}^2 \nu_{R,k,l}}{L \delta_{k}^2\delta_{l}^2\Delta_{k,l}})
\\
&
\times\nu_{R,I,k}\eta_{T,k,k,l}\nu_{S,I,k}+\BO(\frac{1}{M^2})
=V_{l,2,1}+V_{l,2,2}+V_{l,2,3}+V_{l,2,4}+V_{l,2,5}+\BO(\frac{1}{M^2}).
\end{aligned}
\end{equation}
\begin{equation}
\label{exp_v3}
\begin{aligned}
&V_{l,3}=-\frac{\overline{\omega}_{k}'\Delta_{k}\eta_{S,k,k,l,2,1}\nu_{T,l,k}}{\Delta_{S,k,l}\Delta_{k,l}}(
1-\frac{M\overline{\omega}_{k}\overline{\omega}_{l}\nu_{R,l,k}\nu_{S,l,k} }{L\delta_{l}\delta_{k}}
)
-
\frac{M \overline{\omega}_{k}'\Delta_{k}\nu_{R,k,l}\nu_{S,I,k,l}\eta_{S,k,k,l,1.5,0.5} \nu_{T,k,l} }{L\delta_{k}^2\delta_{l}^2\Delta_{S,k,l}\Delta_{k,l}}
\\
&
 +\frac{{\omega}_{k}'\Delta_{k} \nu_{T,k}\eta_{T,k,k,l}\nu_{S,k}}{\Delta_{S,k,l} }(\frac{\nu_{S,k,l}(1-\frac{M\overline{\omega}_{k}\overline{\omega}_{l}\nu_{R,k,l}}{L\delta_{k}\delta_{l}}) }{\Delta_{k,l}}
+\frac{M\nu_{S,I,k,l}^2 \nu_{R,k,l}}{L \delta_{k}^2\delta_{l}^2\Delta_{k,l}})
+\BO(\frac{1}{M^2})
=V_{l,3,1}+V_{l,3,2}+V_{l,3,3}+\BO(\frac{1}{M^2}).
\end{aligned}
\end{equation}
\begin{equation}
\label{exp_v4}
\begin{aligned}
V_{l,4}
&
=
\frac{M\nu_{R,k,l}\nu_{S,k,l}\overline{\omega}_{l}\overline{\omega}_{k}'\Delta_{k}}{L\delta_{k}\delta_{l}\Delta_{k,l}}
-\frac{M\overline{\omega}_{k}\overline{\omega}_{l}\nu_{R,k,l}\eta_{S,k,k,l,2,1}\overline{\omega}_{k}'\Delta_{k}}{L\delta_{k}\delta_{l}\Delta_{k,l}}
-\frac{M\nu_{T,k,l}\eta_{S,k,k,l,1.5,0.5}\nu_{S,I,k,l}\nu_{R,k,l}\overline{\omega}_{k}'\Delta_{k}}{L\delta_{k}^2\delta_{l}^2\Delta_{S,k,l}\Delta_{k,l}}+\BO(\frac{1}{M^2})
\\
&
=V_{l,4,1}+V_{l,4,2}+V_{l,4,3}
+\BO(\frac{1}{M^2}).
\end{aligned}
\end{equation}
\hrulefill
\end{figure*}
We can verify that
\begin{equation}
\nonumber
\begin{aligned}
&V_{l,1,1}+V_{l,2,2}+\frac{K_{I,k}}{\Delta_k }( X_{l,1,1}+X_{l,2,2})
\\
&=\frac{- \partial  \log(\Delta_{k,l}\Delta_{S,k,l})}{\partial \nu_{R,k}}\frac{\partial \nu_{R,k}}{\partial z_k}+\BO(\frac{1}{M^2}),
\end{aligned}
\end{equation}
\begin{equation}
\begin{aligned}
\nonumber
&\frac{1}{\Delta_k}(V_{l,1,2}+V_{l,4,3}+V_{l,3,2}+V_{l,2,3}+V_{l,2,4}+V_{l,1,3})
\\
&
=\frac{- \partial  \log(\Delta_{k,l}\Delta_{S,k,l})}{\partial \delta_{k}}\frac{\partial \delta_{k}}{\partial z_k} 
\\
&+ \frac{- \partial  \log(\Delta_{k,l}\Delta_{S,k,l})}{\partial \nu_{S,I,l,k}}\frac{\partial \nu_{S,I,l,k} }{\partial z_k}+\BO(\frac{1}{M^2}),
\end{aligned}
\end{equation}
\begin{equation}
\begin{aligned}
\nonumber
&\frac{1}{\Delta_k}(V_{l,3,3}+V_{l,2,3})
\\
&
=\frac{- \partial  \log(\Delta_{k,l}\Delta_{S,k,l})}{\partial \nu_{T,l,k}}\frac{\partial \nu_{T,l,k} }{\partial z_k}+\BO(\frac{1}{M^2}),
\end{aligned}
\end{equation}
\begin{equation}
\begin{aligned}
\nonumber
&\frac{1}{\Delta_k}(V_{l,1,4}+V_{l,4,2}+V_{l,2,3}+V_{l,3,1})
\\
&=\frac{- \partial  \log(\Delta_{k,l}\Delta_{S,k,l})}{\partial \nu_{S,l,k}}\frac{\partial \nu_{S,l,k} }{\partial z_k}+\BO(\frac{1}{M^2}),
\\
&\frac{1}{\Delta_k}V_{l,4,1}=\frac{\partial -\log(\Delta_{k,l}\Delta_{S,l,k})}{\partial \overline{\omega}_k}  \frac{{\partial \overline{\omega}_k}}{\partial z_k}
+\BO(\frac{1}{M^2}).
\end{aligned}
\end{equation}

Therefore, when $l \in I_{k}$ we have
\begin{equation}
\label{DD_W}
W_{l}=\frac{\partial -\log(\Delta_{k,l}\Delta_{S,l,k})}{\partial z_k}+\BO(M^{-2}).
\end{equation}
\subsubsection{$l\notin I_k$}
In this case, by similar computations, we have
\begin{equation}
\label{DSW}
W_{l}=\frac{\partial -\log(\Delta_{S,l,k})}{\partial z_k}+\BO(M^{-2}).
\end{equation}
In fact, $-\log(\Delta_{k,l}\Delta_{S,l,k})$ and $-\log(\Delta_{S,l,k})$ are the covariances the two cases, respectively.
\subsection{Convergence of the characteristic function}
Define 
$[\BM]_{l,k}=-\log(\Delta_{S,l,k})-\mathbbm{1}_{l \in I_k}\log(\Delta_{k,l})$.
Then, by~(\ref{DD_W}) and~(\ref{DSW}), we can obtain
\begin{equation}
\begin{aligned}
\frac{\partial \Psi(\bold{u},\bold{z})}{\partial z_{k}} 
&=\jmath u_k \E \Tr\BQ_k \Phi
\\
&=\sum_{l}- u_l u_k \frac{\partial M_{k,l}}{\partial z_k}  \E\Phi +\BO(\frac{1}{M}),
\end{aligned}
\end{equation}
based on which we have~(\ref{cha_con}), which concludes the proof.

\ifCLASSOPTIONcaptionsoff
  \newpage
\fi



%
\bibliographystyle{IEEEtran}
\bibliography{IEEEabrv,ref}

%





\end{document}